\newcommand{\keywords}[1]{{\bf Keywords:} {#1}}
\newcommand{\JEL}[1]{{\bf JEL:} {#1}}
\newtheorem{thm}{theorem}[section]
\newtheorem{proposition}[thm]{Proposition}
\newtheorem{lemma}[thm]{Lemma}
\newenvironment{proof}{{\bf Proof. }}{\hfill$\Box$}
\newenvironment{MyProof}[1]{{\bf Proof {#1}}}{\hfill$\Box$}
\newcommand{\MakeTitle}{\maketitle\newcommand{\and}{$\cdot$ }}
\title{Seasonal Stochastic Volatility and Correlation together with the Samuelson Effect in Commodity Futures Markets
	\thanks{
		We would like to thank
        Ennio Fedrizzi, Fran\c{c}ois Le Grand and Cassio Neri
        for helpful and stimulating comments, discussions and suggestions.
	}
}
\author{
	Lorenz Schneider
	\thanks{Center for Financial Risks Analysis (CEFRA), EMLYON Business School, \texttt{schneider@em-lyon.com}.}
    \quad \quad
	Bertrand Tavin
	\thanks{Center for Financial Risks Analysis (CEFRA), EMLYON Business School, \texttt{tavin@em-lyon.com}.}
}
\date{\today}
\begin{document}

\MakeTitle


\begin{abstract} %
We introduce a multi-factor stochastic volatility model based on the CIR/Heston volatility process that incorporates seasonality and the Samuelson effect.
First, we give conditions on the seasonal term under which the corresponding volatility factor is well-defined.
These conditions appear to be rather mild.
Second, we calculate the joint characteristic function of two futures prices for different maturities in the proposed model.
This characteristic function is analytic.
Finally, we provide numerical illustrations in terms of implied volatility and correlation produced by the proposed model with five different specifications of the seasonality pattern.
The model is found to be able to produce volatility smiles at the same time as a volatility term-structure that exhibits the Samuelson effect with a seasonal component.
Correlation, instantaneous or implied from calendar spread option prices via a Gaussian copula, is also found to be seasonal.
\bigskip

\keywords{Seasonal Commodities \and Seasonal Volatility \and Seasonal Correlation \and Samuelson Effect \and Stochastic Volatility \and Calendar Spread Option
\and Multi-Factor Model \and Joint Characteristic Function}

\bigskip





\JEL{C63 \and C52 \and G13}
\end{abstract}


\section{Introduction}
\label{s:Introduction}

Seasonality is a well-known empirical feature of several commodities markets.
In the energy sector, among fossil fuels, natural gas futures curves,
and among refined products, gasoline, heating oil and fuel oil futures curves all typically display seasonality.
In the agricultural sector, almost all futures curves show seasonality due to harvest times and the seasons of the year.

It is important to distinguish from the outset between two types of seasonality:
seasonality of futures prices and seasonality of volatility of futures prices.

Regarding seasonality of prices, consider agricultural commodities such as corn, soybeans and wheat.
These tend to be in high supply after the harvest in summer, and in low supply in the months preceding the harvest.
This typically leads to relatively low prices of futures contracts with delivery months in the summer or early fall,
and high futures prices of contracts with delivery months in late winter or spring.
Therefore, when the prices of these contracts are plotted as a function of their maturity, they tend to rise and fall with the maturity in some seasonal way.
In other words, the futures curve shows seasonality.
However, the price of an individual futures contract with a given maturity should not rise and fall over time in any kind of seasonal way:
indeed, such a behaviour would lead to easy arbitrage opportunities.

Regarding seasonality of volatility, the situation is different in the sense that now an individual futures contract, with fixed maturity,
tends to go through phases of relatively high or low volatility according to a seasonal pattern.
To take again the example of agricultural commodities, the weather in the months leading up to the harvest has a direct impact on
its quality and quantity, and futures prices can fluctuate strongly as forecasts for the new crop change.
In contrast to this, weather patterns in winter tend to be of minor consequence for the harvest, and futures prices tend to fluctuate less strongly.

It follows from these empirical observations that for commodity models, seasonality is usually only an issue for the volatility, but not for the futures price itself.
Mathematically, individual futures prices are modelled as martingales, and martingales do not have a tendency to rise or fall in a pre-determined way.
\citet{Clark2014} gives a general discussion and numerous examples of seasonality in various commodities markets.

Traditionally, there are two approaches to modelling the prices of futures contacts: futures-based models and spot-based models.
An advantage of futures-based models models is that since the futures price curve is an input of the model,
any arbitrage-free shape of the initial futures curve can be accommodated, including any type of seasonality.
In contrast, a first step for spot-based models is to make them fit the initial futures curve, which uses up model parameters and doesn't necessarily
lead to satisfactory results.

\citet{Sorensen2002} studies the modelling of seasonality in corn, soybean and wheat futures markets.
Analysis of a large data set of CBOT futures prices data from 1972 to 1997 confirms clearly that futures prices exhibit a seasonality.
Another feature that is suggested by the data is seasonal behaviour of the futures price volatilities.
In this vein, \citet{RichterSorensen2002} propose a model for the spot price of soybeans based on seasonal stochastic volatility.
\citet{GemanNguyen2005} also introduce a spot-based model for soybean prices with seasonality both for the price level and the (possibly stochastic) volatility level.
\citet{BackProkopczukRudolf2013} analyze data from corn, soybean, heating oil and natural gas markets and compare various spot-based models with deterministic seasonal volatility.
They conclude that a volatility with seasonality is an important feature when valuing options on futures in these markets.
\citet{BackProkopczukRudolf2011} also study a futures-based model with seasonal stochastic volatility, which is essentially the \citet{Heston1993} stochastic volatility model
with deterministic, seasonal mean-reversion level in the square-root process followed by the variance.
\citet{SchmitzWangKimn2013} study calendar spread options in agricultural grain markets relying on a joint Heston model for the two underlying futures contracts.
These two contracts share the same variance process, which has a constant mean-reversion level, and therefore does not display seasonality.
In the context of interest rates, the \citet{CoxIngersollRoss1985} (CIR) model has been extended to time-dependent parameters by \citet{Maghsoodi1996},
and the \citet{Heston1993} model with time-dependent parameters, including the correlation between the spot price and its variance, has been studied by \citet{BenhamouGobetMiri2010}.
Let us also note that in the context of electricity markets, \citet{LuciaSchwartz2002} give a detailed justification of the choice of seasonality function,
as do \citet{GemanRoncoroni2006}.

In parallel to his remark about seasonality in futures prices, \citet{Sorensen2002} confirms the \citet{Samuelson1965} hypothesis that
``the variations of distant maturity futures are lower than nearby futures prices.''
We call this pattern the \textit{Samuelson effect.} 
Popular futures-based models that incorporate this effect are those of \citet{ClewlowStrickland1999_1,ClewlowStrickland1999_2}.
The volatility functions used in these models are deterministic.
\citet{SchneiderTavin2015} extend the multi-factor model of \citet{ClewlowStrickland1999_2} to incorporate stochastic volatility.
Not only is stochastic volatility an incontestible empirical feature of prices in futures markets,
its inclusion also allows to calibrate the model to option volatility smiles and skews typically seen in futures option markets.
In agricultural markets, a reflection of stochastic volatility is the recent introduction of several volatility indices on the CBOE/CBOT:
the Corn Volatility Index (CIV) and Soybean Volatility Index (SIV) were introduced in 2011,
and the Wheat Volatility Index (WIV) was introduced in 2012.

In this paper, we extend the model introduced in \citet{SchneiderTavin2015} to incorporate seasonal trends in the stochastic volatility processes.
To achieve this, we begin by studying the mathematical conditions to impose on the seasonality function to guarantee that the generalized CIR process
retains important features, such as existence and uniqueness of a strong solution, and positivity.
It turns out that these conditions are very mild.
These conditions appear to be not only interesting from a theoretical point of view,
but also useful in practice, because different markets may need to be modelled with different seasonality patterns for the volatility.

We then introduce the model with seasonal volatility and show how, by a generalisation of the results in \citet{SchneiderTavin2015},
the joint characteristic function of the log-returns of two futures prices can be obtained.
It turns out that the Riccati ODE for the first function $A$ is not affected, and only the integral ODE for the second function $B$ depends on $\theta$ and is altered.
Therefore, the same closed-form solution for $A$ as in \citet{SchneiderTavin2015} can be used.

Next, we propose several specifications of seasonality functions and compare them.

Then, we calculate implied volatility surfaces in our model.
We also calculate calendar spread option prices and examine the effect of changes in the seasonality function on these prices.

Regarding correlations, we study the effect seasonality has on the instantaneous correlation between two futures contracts in the case when the variances are deterministic.
We find that the influence of the seasonality on the correlation depends on the magnitude of the exponential damping of the volatility factors.
And from the calendar spread option prices we calculate implied correlations, where again we can observe how the seasonal pattern of the variance
translates into a seasonal pattern of the implied correlation.


The rest of the paper proceeds as follows.
In Section \ref{s:TheCIRProcessWithTimeDependentDrift} we discuss the CIR process with time-dependent drift.
In Section \ref{s:SeasonalStochasticVolatilityModelForAgriculturals} we define the proposed model and calculate the associated joint characteristic function.
We also give the methods to be used for vanilla and calendar spread option pricing.
In Section \ref{s:SeasonalityFunctions} we review various seasonality functions that can be used to specify the proposed model.
Section \ref{s:ImpliedVolatilities} gives a numerical illustration of the implied volatility patterns produced by the proposed model with different seasonality functions.
Section \ref{s:SeasonalStochasticCorrelation} deals with the seasonal behaviour of the instantaneous correlation when volatility is seasonal.
In this section we also study the effect of seasonality on calendar spread option prices.
Section \ref{s:Conclusion} concludes.


\section{The CIR Process with Time-Dependent Drift}
\label{s:TheCIRProcessWithTimeDependentDrift}

To our knowledge, \citet{HullWhite1990} were the first to consider extending the \citet{CoxIngersollRoss1985} (CIR) interest rate model to time-dependent coefficients.
They conclude that in this general case, it is no longer possible to obtain European bond option prices analytically.
\citet{Maghsoodi1996} also studies the ``extended'' CIR process in which the parameters $\kappa, \theta$ and $\sigma$ are time-dependent and finds, under certain conditions,
the unique strong solution to the SDE describing the evolution of the process.

In the context of the \citet{Heston1993} stochastic volatility model, the CIR process represents the variance process of a stock price or foreign-exchange rate.
\citet{BenhamouGobetMiri2010} study the ``time dependent Heston model'' and derive analytical formulas approximating European option prices.
In their setup, the mean-reversion parameter $\kappa$ is constant, but the parameters $\theta, \sigma$ and $\rho$ (giving the correlation between the stock price, or foreign-exchange rate, and
its variance) are all allowed to vary with time $t$.

In the model introduced here, we only let the mean-reversion level given by $\theta$ depend on time,
while the other parameters $\kappa > 0$ and $\sigma > 0$ (and later also $\rho$) remain constant.

Let $(\Omega, \mathcal{A}, {\mathbb P}, \mathcal{F})$ be a filtered probability space, and let $B = (B_t)_{t \geq 0}$ be a Brownian motion on this space.
Let $\mathcal{T} = \{ t_i, i = 1, ... \}$ be a set of times having only finitely many points in every bounded interval,
and let $\mathcal{Z} = \{ 0 \leq t_1 < t_2 < ... < t_i < ... \}$ be the partition of $\mathbb{R}_0^+$ defined by $\mathcal{T}$.
Finally, let the seasonality function $\theta: \mathbb{R}_0^+ \to \mathbb{R}^+$ be piecewise continuous with respect to $\mathcal{Z}$,
and assume that it is bounded from below and above by positive constants $\theta_{min}$ and $\theta_{max}$.

We will compare two processes $v$ (seasonal) and $\tilde{v}$ (non-seasonal), which are given, respectively, by the SDEs
\begin{align}
\label{CIR_SDE_TimeDependentTheta}
dv(t) &= \kappa \left( \theta(t) - v(t) \right) dt + \sigma \sqrt{v(t)} dB(t),
\\
\label{CIR_SDE_ConstantTheta}
d \tilde{v}(t) &= \kappa \left( \theta_{min} - \tilde{v}(t) \right) dt + \sigma \sqrt{\tilde{v}(t)} dB(t),
\end{align}
with identical parameters $\kappa > 0, \sigma > 0$ and initial conditions $0 < \tilde{v}(0) = \tilde{v}_0 \leq v(0) = v_0$.

It is well known that \eqref{CIR_SDE_ConstantTheta} has a unique strong solution.
The following result describes the solution to \eqref{CIR_SDE_TimeDependentTheta}.
\begin{proposition}
\label{Prop:CIR_SDE_Solution}
Assume that the seasonality function $\theta$ is piecewise continuous w.r.t. the partition $\mathcal{Z}$ of $\mathbb{R}_0^+$,
and bounded by positive constants $\theta_{min}$ and $\theta_{max}$, i.e. for all $t \geq 0, 0 < \theta_{min} \leq \theta(t) \leq \theta_{max}$.
Let the processes $v$ and $\tilde{v}$ be given by \eqref{CIR_SDE_TimeDependentTheta} and \eqref{CIR_SDE_ConstantTheta}, respectively.
Then:
\begin{enumerate}
\item
The process \eqref{CIR_SDE_TimeDependentTheta} has a unique strong solution with continuous sample paths.
\item
${\mathbb P} \left[ \tilde{v}_t \leq v_t, \forall t \geq 0 \right] = 1$.
\item
If the Feller condition $\sigma^2 < 2 \kappa \theta_{min}$ is satisfied for $\theta_{min}$, then
the process $v$ is strictly positive.
\end{enumerate}
\end{proposition}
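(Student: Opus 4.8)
The plan is to treat the three assertions in order, with the comparison statement (ii) acting as the engine that drives (iii): for (i) I would appeal to the Yamada--Watanabe circle of ideas, (ii) is a one-dimensional comparison theorem for two equations driven by the \emph{same} Brownian motion $B$, and (iii) then follows by combining (ii) with the classical Feller boundary analysis for the time-homogeneous square-root diffusion $\tilde v$.

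For (i), I would first extend the dispersion coefficient to $x\mapsto\sigma\sqrt{x^{+}}$ so that all coefficients are globally defined, and then establish \emph{pathwise uniqueness} and \emph{existence of a weak solution} separately. Pathwise uniqueness is exactly the Yamada--Watanabe criterion: the drift $b(t,x)=\kappa(\theta(t)-x)$ is Lipschitz in $x$ with constant $\kappa$ uniformly in $t$ and measurable in $t$ (as $\theta$ is piecewise continuous w.r.t.\ $\mathcal{Z}$), while $x\mapsto\sigma\sqrt{x^{+}}$ is $\tfrac12$-H\"older, the borderline modulus $\rho(u)=\sqrt{u}$ with $\int_{0+}\rho^{-2}=\infty$ covered by that criterion. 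For weak existence I would work interval by interval along $\mathcal{Z}$: on each $[t_i,t_{i+1})$ the function $\theta$ is continuous and bounded, the coefficients are continuous in $(t,x)$ with linear growth, so a non-exploding weak solution exists there; concatenating these pieces, using that $\mathcal{Z}$ has only finitely many points in each bounded interval, yields a global weak solution with continuous paths. Yamada--Watanabe then upgrades this to a unique strong solution with continuous sample paths. Non-negativity of $v$ --- so that $\sqrt{v(t)}$ is genuine and the $x^{+}$ truncation is immaterial --- then follows from $v_t\geq\tilde v_t\geq 0$ once (ii) is available, or alternatively from the standard non-negativity of square-root diffusions started at a non-negative value. (This statement is also closely related to the extended CIR process treated by \citet{Maghsoodi1996}.)

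For (ii), the two equations \eqref{CIR_SDE_TimeDependentTheta} and \eqref{CIR_SDE_ConstantTheta} share the common dispersion coefficient $\sigma\sqrt{\cdot}$, their drifts are ordered pointwise, $\kappa(\theta(t)-x)\geq\kappa(\theta_{min}-x)$ because $\theta(t)\geq\theta_{min}$, the dominated drift (that of $\tilde v$) is Lipschitz, and $\tilde v_0\leq v_0$; these are precisely the hypotheses of the classical comparison theorem for one-dimensional stochastic differential equations, giving $\mathbb{P}[\tilde v_t\leq v_t,\ \forall t\geq 0]=1$. If a self-contained proof is wanted I would reproduce the Yamada--Watanabe local-time removal: writing $Y_t=v_t-\tilde v_t$ and applying It\^o's formula to $\phi_n(Y_t)$ for the usual sequence of nonnegative convex $C^2$ functions $\phi_n$ decreasing pointwise to $x\mapsto x^{-}$ with $0\leq\phi_n''(x)\leq 2/(n|x|)$ near the origin, the stochastic integral has zero expectation, the second-order term is bounded --- via $|\sqrt a-\sqrt b|^{2}\leq|a-b|$ --- by $\sigma^2|Y_s|\,\phi_n''(Y_s)\leq 2\sigma^2/n$ so its time-integral vanishes as $n\to\infty$, and the drift term has the favourable sign since $\phi_n'\leq 0$ and $\theta(t)-\theta_{min}\geq 0$; letting $n\to\infty$ gives $\mathbb{E}[Y_t^{-}]=0$, and path-continuity promotes this to the simultaneous-in-$t$ statement.

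For (iii), when $\sigma^2<2\kappa\theta_{min}$ the time-homogeneous process $\tilde v$ of \eqref{CIR_SDE_ConstantTheta} started at $\tilde v_0>0$ satisfies $\mathbb{P}[\tilde v_t>0,\ \forall t\geq 0]=1$ --- this is Feller's test at the origin (the scale function behaves like $x^{-2\kappa\theta_{min}/\sigma^2}$ near $0$ and is therefore non-integrable there). Since $v_0\geq\tilde v_0$, part (ii) gives $v_t\geq\tilde v_t$ for all $t$ almost surely, hence $v_t\geq\tilde v_t>0$ for all $t$ almost surely, which is the claimed strict positivity. I expect the one genuinely delicate point to be (ii): because $\sigma\sqrt{\cdot}$ is only $\tfrac12$-H\"older, a naive Gr\"onwall estimate on $v-\tilde v$ fails, so one must either invoke the comparison theorem in a form whose hypotheses match exactly --- same driving Brownian motion, common non-Lipschitz dispersion, ordered Lipschitz drifts, ordered initial data --- or run the $\phi_n$ argument above in full; everything else (weak existence by pasting over $\mathcal{Z}$, pathwise uniqueness, and the Feller positivity of $\tilde v$) is routine once the bounds $0<\theta_{min}\leq\theta(t)\leq\theta_{max}$ and the piecewise continuity are in hand.
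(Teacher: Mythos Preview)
Your argument is correct and follows the same overall architecture as the paper: Yamada--Watanabe for (i), a one-dimensional comparison theorem for (ii), and the classical Feller positivity of $\tilde v$ together with (ii) for (iii).

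The one place where your route genuinely differs is in the handling of the time-discontinuities of $\theta$ in part (ii). The paper invokes a comparison result (Proposition~2.18 in \citet{KaratzasShreve1988}) that is stated for drift continuous in $t$; to accommodate the merely piecewise-continuous $\theta$, it applies that result on each sub-interval $[t_i,t_{i+1})$ of the partition $\mathcal{Z}$, uses continuity of the sample paths of $v$ and $\tilde v$ to carry the inequality across each discontinuity point $t_i$, and then proceeds inductively (the assumption that $\mathcal{T}$ has no accumulation points is what makes this induction cover all of $\mathbb{R}_0^+$). Your self-contained $\phi_n$/local-time argument, by contrast, only needs $\theta$ to be bounded and measurable in $t$, so it delivers the global comparison in one stroke without any patching. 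This is a little cleaner and in fact shows that for (ii) the piecewise-continuity hypothesis is stronger than necessary; what the paper's version buys is that it makes transparent \emph{where} the partition $\mathcal{Z}$ and the piecewise-continuity assumption actually enter. Interestingly, you do use the interval-by-interval idea, but in part (i) (for weak existence) rather than in part (ii); the paper is terser there and simply cites \citet{KaratzasShreve1988}.
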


We prove this result in appendix \ref{a1:Proofs}.

Note that if the Feller condition is violated, then $\tilde{v}$ can possibly reach $0$, but it still cannot become negative.

The piecewise continuity condition on $\theta$ means that even discontinuous specifications of the mean-reversion level,
such as the sawtooth function given by
\begin{equation}
\label{SawtoothFunction}
\theta(t) = a + b \left( t - t_0 - \left \lfloor t - t_0 \right \rfloor \right),
\end{equation}
with $a,b > 0$ and $t_0 \in [0,1[$, pose no problems.
Some other examples of the form of $\theta$ are discussed in Section \ref{s:SeasonalityFunctions}.


\section{A Model with Seasonal Stochastic Volatility for Agricultural Futures}
\label{s:SeasonalStochasticVolatilityModelForAgriculturals}

\subsection{The Financial Framework and the Model}
\label{ss:FinancialFrameworkAndModel}

We begin by giving a mathematical description of our model under the risk-neutral measure ${\mathbb Q}$.
Let $n \geq 1$ be an integer, and let $B_1, ..., B_{2n}$ be Brownian motions under ${\mathbb Q}$.
Let $T_m$ be the maturity of a given futures contract.
The futures price $F(t, T_m)$ at time $t, 0 \leq t \leq T_m$, is assumed to follow the stochastic differential equation (SDE)
\begin{equation}
\label{FuturesSDE}
dF(t, T_m) = F(t, T_m) \sum_{j=1}^n e^{-\lambda_j(T_m - t)} \sqrt{v_j(t)} dB_j(t), \; F(0, T_m) = F_{m,0} > 0.
\end{equation}
The processes $v_j, j=1, ..., n,$ are stochastic variance processes with time-dependent seasonal mean-reversion level assumed to follow the SDE
\begin{equation}
\label{VarianceSDE}
dv_j(t) = \kappa_j \left( \theta_j(t) - v_j(t) \right) dt + \sigma_j \sqrt{v_j(t)} dB_{n+j}(t), \; v_j(0) = v_{j,0} > 0.
\end{equation}
Various possibilities of the specification of the seasonal mean-reversion level functions $\theta_j: \mathbb{R}_0^+ \to \mathbb{R}^+$ are presented and discussed in Section \ref{s:SeasonalityFunctions}.
Note that the initial futures curve $F(0, T_m), m = 1, 2, ...,$ is exogenous in our model and can therefore accommodate any seasonal pattern shown by the futures prices.

For the correlations, we assume
\begin{equation}
\label{FuturesVarianceCorrelations}
\langle dB_j(t), dB_{n+j}(t) \rangle = \rho_j dt, -1 < \rho_j < 1, j=1, ..., n,
\end{equation}
and that otherwise the Brownian motions $B_j, B_k, k \neq j, j + n,$ are independent of each other.
As we will see, this assumption has as a consequence that the characteristic function factors into $n$ separate expectations.

For fixed $T_m$, the futures log-price $\ln F(t, T_m)$ follows the SDE
\begin{equation}
\label{LogFuturesSDE}
d\ln F(t, T_m) = \sum_{j=1}^n \left( e^{-\lambda_j(T_m - t)} \sqrt{v_j(t)} dB_j(t) - \frac{1}{2} e^{-2\lambda_j(T_m - t)} v_j(t) dt \right),
\; \ln F(0, T_m) = \ln F_{m,0}.
\end{equation}
Integrating \eqref{LogFuturesSDE} from time $0$ up to a time $T, T \leq T_m$, gives
\begin{equation}
\label{LogFuturesIntegratedSDE}
\ln F(T, T_m) -  \ln F(0, T_m) = \sum_{j=1}^n \int_0^T e^{-\lambda_j(T_m - t)} \sqrt{v_j(t)} dB_j(t) - \frac{1}{2} \sum_{j=1}^n \int_0^T e^{-2\lambda_j(T_m - t)} v_j(t) dt.
\end{equation}

We define the log-return between times $0$ and $T$ of a futures contract with maturity $T_m$ as
$$
X_m(T) := \ln \left( \frac{F(T, T_m)}{F(0, T_m)} \right).
$$
In the following, the joint characteristic function $\phi$ of two log-returns $X_1(T), X_2(T)$ will play an important role.
For $u = (u_1, u_2) \in {\mathbb C}^2$, $\phi$ is given by
\begin{equation}
\label{jointCharacteristicFunction_returns}
\phi(u)
= \phi(u; T, T_1, T_2)
= {\mathbb{E^Q}} \left[ \exp \left( i \sum_{k=1}^2 u_k X_k(T) \right) \right].
\end{equation}
The joint characteristic function $\Phi$ of the futures log-prices $\ln F(T, T_1), \ln F(T, T_2)$ is then given by
\begin{equation}
\label{jointCharacteristicFunction_prices}
\Phi(u) = \exp \left( i \sum_{k=1}^2 u_k \ln F(0, T_k) \right) \cdot \phi(u).
\end{equation}
Note that futures prices in our model are not mean-reverting,
and that the log-price $\ln F(t, T_m)$ at time $t$ and the log-return $\ln F(T, T_m) - \ln F(t, T_m)$ are independent random variables.

In the following proposition, we show how the joint characteristic function $\phi$, and therefore also the single characteristic function $\phi_1$,
is given by a system of two ordinary differential equations (ODE).
\begin{proposition}
\label{Prop:JointCharacteristicFunction}
The joint characteristic function $\phi$ at time $T \leq T_1, T_2$ for the log-returns $X_1(T), X_2(T)$ of two futures contracts with maturities $T_1, T_2$ is given by
\begin{align*}
\phi(u) &= \phi(u; T, T_1, T_2)
\\
&=
\prod_{j=1}^n
\exp \left( -i \frac{\rho_j}{\sigma_j} f_{j,1}(u,0) \left( v_j(0) + \kappa_j \hat{\theta}_{j, T} \right) \right)
\exp \left( A_j(0,T) v_j(0) + B_j(0,T) \right),
\end{align*}
where
\begin{align*}
f_{j,1}(u,t)        &= \sum_{k=1}^2 u_k e^{- \lambda_j(T_k - t)}, \quad f_{j,2}(u,t) = \sum_{k=1}^2 u_k e^{-2\lambda_j(T_k - t)},
\\
q_j(u,t)            &= i \rho_j \frac{\kappa_j - \lambda_j}{\sigma_j} f_{j,1}(u,t) - \frac{1}{2} (1 - \rho_j^2) f_{j,1}^2(u,t) - \frac{1}{2} i f_{j,2}(u,t),
\\
\hat{\theta}_{j, T} &= \int_0^T \theta_j(t) e^{\lambda_j t} dt,
\end{align*}
and the functions $A_j: (t,T) \mapsto A_j(t,T)$ and $B_j: (t,T) \mapsto B_j(t,T)$ satisfy the two differential equations
\begin{align*}
\frac{\partial A_j}{\partial t} - \kappa_j A_j + \frac{1}{2} \sigma_j^2 A_j^2 + q_j &= 0,
\\
\frac{\partial B_j}{\partial t} + \kappa_j \theta_j(t) A_j &= 0,
\end{align*}
with $A_j(T,T) = i \frac{\rho_j}{\sigma_j} f_{j,1}(u,T), \; B_j(T,T) = 0.$

The single characteristic function $\phi_1$ at time $T \leq T_1$ for the log-return $X_1(T)$ of a futures contract with maturity $T_1$ is given by setting $u_2 = 0$
in the joint characteristic function.
\end{proposition}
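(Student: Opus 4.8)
The plan is to exploit the affine structure of the model in three moves: reduce the bivariate, $2n$-dimensional expectation to a product of $n$ independent one-factor expectations; in each factor eliminate the Brownian correlation $\rho_j$ by an integration by parts followed by a conditioning argument; and identify the remaining expectation through a Feynman--Kac ansatz that produces the stated ODE system.

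\textbf{Factorisation.} By \eqref{LogFuturesIntegratedSDE}, $X_k(T)=\sum_{j=1}^n X_k^{(j)}(T)$ with $X_k^{(j)}(T)=\int_0^T e^{-\lambda_j(T_k-t)}\sqrt{v_j(t)}\,dB_j(t)-\tfrac12\int_0^T e^{-2\lambda_j(T_k-t)}v_j(t)\,dt$, and $X_k^{(j)}(T)$ depends only on the pair $(B_j,B_{n+j})$. Since by \eqref{FuturesVarianceCorrelations} distinct such pairs are mutually independent, \eqref{jointCharacteristicFunction_returns} factors as $\phi(u)=\prod_{j=1}^n\mathbb{E}^{\mathbb{Q}}[\exp(i\sum_{k=1}^2 u_k X_k^{(j)}(T))]$. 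Fix $j$ and drop it from the notation, writing $B=B_j$, $W=B_{n+j}$ with $d\langle B,W\rangle=\rho\,dt$, and $f_1=f_{j,1}$, $f_2=f_{j,2}$; the $j$-th exponent is $i\int_0^T f_1(u,t)\sqrt{v(t)}\,dB(t)-\tfrac{i}{2}\int_0^T f_2(u,t)v(t)\,dt$, and I note the identity $f_1(u,t)=e^{\lambda t}f_1(u,0)$.

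\textbf{Removing the correlation.} Split $B=\rho W+\sqrt{1-\rho^2}Z$ with $Z\perp W$, and substitute $\sqrt v\,dW=\sigma^{-1}(dv-\kappa\theta(t)\,dt+\kappa v\,dt)$ from \eqref{VarianceSDE}. Since $f_1(u,\cdot)$ is $C^1$ with $\partial_t f_1=\lambda f_1$ and the paths of $v$ are continuous by Proposition \ref{Prop:CIR_SDE_Solution} (so the discontinuities of $\theta$ are irrelevant here), integration by parts gives $\int_0^T f_1\,dv=f_1(u,T)v(T)-f_1(u,0)v(0)-\lambda\int_0^T f_1 v\,dt$, while the identity above yields $\int_0^T f_1(u,t)\theta(t)\,dt=f_1(u,0)\hat\theta_{j,T}$. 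Collecting the deterministic terms produces the prefactor $\exp(-i\tfrac{\rho}{\sigma}f_1(u,0)(v(0)+\kappa\hat\theta_{j,T}))$, and the $j$-th expectation becomes $\mathbb{E}[\exp(i\tfrac{\rho}{\sigma}f_1(u,T)v(T)+i\sqrt{1-\rho^2}\int_0^T f_1\sqrt v\,dZ+\int_0^T(\cdots)v\,dt)]$. Now $f_1\sqrt v$ is adapted to the filtration generated by $W$ alone, and $Z$ is independent of it, so conditioning on that filtration makes $\int_0^T f_1\sqrt v\,dZ$ centred Gaussian with variance $\int_0^T f_1^2 v\,dt$; the conditional expectation replaces $i\sqrt{1-\rho^2}\int_0^T f_1\sqrt v\,dZ$ by $-\tfrac12(1-\rho^2)\int_0^T f_1^2 v\,dt$. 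Combining the three $v\,dt$-terms gives exactly $\int_0^T q_j(u,t)v(t)\,dt$, so what remains is $\mathbb{E}[\exp(i\tfrac{\rho}{\sigma}f_1(u,T)v(T)+\int_0^T q_j(u,t)v(t)\,dt)]$; finiteness of these integrals and of the expectation follows from the integrability of the CIR paths.

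\textbf{Feynman--Kac and the ODE system.} For $0\le s\le T$ make the affine ansatz $\mathbb{E}[\exp(i\tfrac{\rho}{\sigma}f_1(u,T)v(T)+\int_s^T q_j v\,dt)\mid v(s)]=\exp(A_j(s,T)v(s)+B_j(s,T))$, equivalently that $\exp(\int_0^s q_j v\,dt)\exp(A_j(s,T)v(s)+B_j(s,T))$ is a martingale. Applying It\^o's formula with \eqref{VarianceSDE} and setting the drift to zero, the coefficient of $v$ gives the Riccati equation $\partial_s A_j-\kappa_j A_j+\tfrac12\sigma_j^2 A_j^2+q_j=0$ and the remaining term gives $\partial_s B_j+\kappa_j\theta_j(s)A_j=0$, with terminal data $A_j(T,T)=i\tfrac{\rho_j}{\sigma_j}f_{j,1}(u,T)$, $B_j(T,T)=0$ read off from the terminal payoff. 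The Riccati equation does not involve $\theta_j$, so the closed-form solution from \citet{SchneiderTavin2015} applies verbatim; then $B_j(0,T)=\int_0^T\kappa_j\theta_j(t)A_j(t,T)\,dt$ follows by a single integration, and this is the only place the seasonality enters. Evaluating at $s=0$, reinstating the prefactor and taking the product over $j$ gives $\phi(u)$; setting $u_2=0$ throughout kills every $k=2$ term and yields $\phi_1$.

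\textbf{Main obstacle.} The delicate point is making the Feynman--Kac step rigorous: one must show the candidate exponential process is a true martingale, not merely a local one — a localisation argument together with a uniform-integrability bound, using that $A_j$ stays bounded on $[0,T]$ so the exponent has controlled growth — and one must check that the complex Riccati ODE does not explode on $[0,T]$ for the admissible range of $u$, which is also what underlies the analyticity of $\phi$. By contrast, the factorisation, integration by parts and conditioning steps are routine given the path continuity provided by Proposition \ref{Prop:CIR_SDE_Solution}.
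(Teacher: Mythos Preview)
Your proposal is correct and follows essentially the same approach as the paper: factorise over the $n$ independent $(B_j,B_{n+j})$-pairs, use the decomposition $B_j=\rho_j B_{n+j}+\sqrt{1-\rho_j^2}Z_j$ together with the integration-by-parts identity for $\int_0^T f_{j,1}\,dv_j$ (this is exactly the paper's Lemma~\ref{Lemma:StochasticIntegral}, which produces the prefactor containing $\hat\theta_{j,T}$), condition on the variance filtration to turn the $Z_j$-integral into a quadratic $v\,dt$-term yielding $q_j$, and then run the affine Feynman--Kac argument to obtain the Riccati/linear ODE pair. The only cosmetic difference is that you phrase the last step as a martingale drift condition via It\^o's formula, whereas the paper writes the generator PDE \eqref{hPDE} and substitutes the exponential-affine ansatz; these are equivalent, and your extra paragraph on the true-martingale issue is a rigour point the paper does not address.
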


Note that the integrals $\hat{\theta}_{j, T}$ only depend on the specification of the seasonality functions $\theta_j$ and the maturity $T$.
Therefore, their value can be calculated once and then stored, avoiding recalculations during repeated calls to the characteristic function.
Of course, if $\theta$ is a constant function, then the joint characteristic function given above is the same as the one given in \citet{SchneiderTavin2015}.

We prove this result in appendix \ref{a1:Proofs}.

\subsection{Pricing Vanilla Options}
\label{ss:PricingVanillaOptions}


European options on futures contracts can be priced using the Fourier inversion technique as described in \citet{Heston1993} and \citet{BakshiMadan2000}.
Let $K$ denote the strike and $T$ the maturity of a European call option on a futures contract $F$ with maturity $T_m \geq T$.
The function needed for this technique is the single characteristic function $\Phi_1$ of the futures log-price $\ln F(T, T_m)$, given by
$\Phi_1(u) = e^{i u \ln F(0, T_m)} \phi_1(u)$,
with $\phi_1(u)$ obtained from Proposition \ref{Prop:JointCharacteristicFunction}.
European put options can be priced via put-call parity $C - P = e^{-rT} \left( F(0, T_1) - K \right)$.

\subsection{Pricing Calendar Spread Options}
\label{ss:PricingCalendarSpreadOptions}



Calendar spread options (CSO) are popular options in commodity markets.
To give a recent example, in February 2015 the Minneapolis Grain Exchange (MGX) introduced North American Hard Red Spring Wheat (HRSW) CSOs for trade on CME Globex.
Calendar spread options are defined as follows.

Let two futures maturities $T_1, T_2$, an option maturity $T$, and a strike $K$ (which is allowed to be negative) be fixed.
Then the payoffs of calendar spread call and put options, $CSC$ and $CSP$, are respectively given by
\begin{equation}
\label{CalendarSpreadCallPayoff}
CSC(T) = \left( F(T, T_1) - F(T, T_2) - K \right)^+,
\end{equation}
\begin{equation}
\label{CalendarSpreadPutPayoff}
CSP(T) = \left( K - \left( F(T, T_1) - F(T, T_2) \right) \right)^+ .
\end{equation}
To evaluate such options with a pricing model, the discounted expectation of the payoff must be calculated in the risk-neutral measure.
As for ``vanilla'' European options, there is a model-independent put-call parity for calendar spread options:
\begin{equation}
\label{CalendarSpreadPutCallParity}
CSC(0) - CSP(0) = e^{-rT} \left( F(0, T_1) - F(0, T_2) - K \right).
\end{equation}
\citet{CaldanaFusai2013} show how to calculate CSC prices with models for which the joint characteristic function is known.
Strictly speaking, their methods give a lower bound for the calendar spread option price, but usually this bound is very close to the true price.
Note that in case the strike $K = 0$ the formula is exact.
When we calculate CSO prices, we do so with this method.
An alternative method, based on the $2$-dimensional FFT algorithm, is given by \citet{HurdZhou2010}.
We refer to \citet{SchneiderTavin2015} and references therein for more details on calendar spread options.


\section{Seasonality Functions}
\label{s:SeasonalityFunctions}

Below we present four types of seasonality functions that can be used as parametric forms to model seasonal variations of the volatility.
For a given factor, $\theta$ represents the seasonality term of the volatility dynamics and $\hat{\theta}$ is an integral involving $\theta$ appearing in different expressions such as the characteristic function.
For $T>0$ and $\lambda \in \mathbb{R}$, $\hat{\theta}$ is written
\begin{equation}
\label{thetaTransform}
\hat{\theta}_T(\lambda) = \int^{T}_{0}{\theta(t)e^{\lambda t}dt}.
\end{equation}

The presented seasonality functions $\theta$ are parametric and work with three parameters: $a,b$ and $t_0$.
Parameter $a$ controls for the volatility level.
Parameter $b$ controls the magnitude of the seasonality pattern and $t_0$ corresponds to the time of the year when the volatility reaches its maximum.

It seems relevant to consider different seasonality patterns as the reasons underpinning the seasonality phenomena in volatility may vary from a market to another.
The first two patterns considered below are smooth and are based on the sinus function.
The three others have points of non-differentiability and/or discontinuity, and may be used to represent a less regular evolution of the volatility.

\underline{The sinusoidal pattern} is written, with $a,b > 0$ and $t_0 \in [0,1[$,
\begin{align}
\theta(t) & = a + b\cos{\left(2\pi\left(t - t_0\right)\right)},
\label{sinusoid} \\
\hat{\theta}_T(\lambda) & = \frac{b e^{\lambda T}}{\lambda^2+4\pi^2}\left(2 \pi \sin {\left(2 \pi (T-t_0)\right)} + \lambda \cos{\left(2 \pi (T-t_0)\right)} \right) \nonumber \\
& + \frac{b}{\lambda^2+4\pi^2}\left(2 \pi \sin{\left(2\pi t_0 \right) - \lambda \cos {\left(2\pi t_0 \right)}}  \right) + \frac{a}{\lambda}\left(e^{\lambda T}-1\right).
\label{int_sinusoid}
\end{align}
For a proof of this expression, we refer to Appendix \ref{a1:Proofs}.

\underline{The exponential-sinusoidal pattern} is written, with $a,b > 0$ and $t_0 \in [0,1[$,
\begin{equation}
\theta(t) = a\exp{\left(b \cos{\left(2\pi\left(t - t_0 \right)\right)}\right)}.
\end{equation}
This parametric form for $\theta$ is used in \citet{BackProkopczukRudolf2011}.
There is no closed form expression for $\hat{\theta}_T(\lambda)$.

\underline{The sawtooth pattern} is written, with $a,b > 0$ and $t_0 \in [0,1[$,
\begin{align}
\theta(t) & = a + b\left(t-t_0-\left\lfloor t-t_0 \right\rfloor \right), \\
\hat{\theta}_T(\lambda) & = \frac{1}{\lambda}\left(a+b\left(\frac{1}{\lambda}-t_0 \right) \right)-\frac{e^{-\lambda T}}{\lambda}\left(a+b\left(T+\frac{1}{\lambda}-t_0 \right) \right) \nonumber \\
 & -\frac{be^{\lambda t_0}}{\lambda} \left(\left\lfloor T-t_0 \right\rfloor e^{\lambda(T-t_0)} - \left(\sum^{\left\lfloor T-t_0 \right\rfloor}_{k=1}{e^{\lambda k}}\right)\mathbb{I}_{\left\{ T \geq t_0\right\}} + \mathbb{I}_{\left\{ T<t_0\right\}}+e^{-\lambda t_0}-1 \right),
\label{int_sawtooth}
\end{align}
where $\left\lfloor . \right\rfloor$ denotes the floor function, $\mathbb{I}$ is the indicator function and, by convention, $\sum^{p}_{k=1}{e^{\lambda k}}=0$ if $p<1$. The proof leading to this expression is found in Appendix \ref{a1:Proofs}.

\underline{The triangle pattern} is written, with $a,b > 0$ and $t_0 \in [0,1[$,
\begin{align}
& \theta(t) = a + b\left|\frac{1}{2}-\left(t-t_0-\left\lfloor t-t_0 \right\rfloor \right) \right| , \\
& \hat{\theta}_T(\lambda) = \frac{a}{\lambda}\left(e^{\lambda T}-1 \right) + \frac{b e^{\lambda t_0}}{\lambda}\left[ \left(z_2 + \left(\frac{2}{\lambda}e^{-\frac{\lambda}{2}}+e^{-\lambda t_0}(z_2-t_0) \right)\mathbb{I}_{\left\{ t_0 > \frac{1}{2} \right\}} - e^{-\lambda t_0}(z_2-t_0)\mathbb{I}_{\left\{ t_0 \leq \frac{1}{2} \right\}} \right)\right. \nonumber \\
& + \left(\left(\frac{2}{\lambda} e^{\frac{\lambda}{2}} + z_2 e^{\lambda}-z_1 \right)\sum^{n-1}_{k=0}{e^{\lambda k}}+ e^{\lambda n}\left(\left(\frac{2}{\lambda}e^{\frac{\lambda}{2}}-z_3 e^{\lambda \alpha} \right)\mathbb{I}_{\left\{ \alpha > \frac{1}{2} \right\}} + z_3 e^{\lambda \alpha}\mathbb{I}_{\left\{ \alpha \leq \frac{1}{2} \right\}} -z_1 \right) \right) \mathbb{I}_{\left\{ T \geq t_0\right\}} \nonumber \\
& + \left(e^{\lambda(T-t_0)}\left(z_2+T-t_0 \right)-z_2 \right)\mathbb{I}_{\left\{T-t_0 \in [-\frac{1}{2},0[ \right\}} \nonumber \\
& - \left. \left(\frac{2}{\lambda}e^{-\frac{\lambda}{2}} + e^{\lambda(T-t_0)}\left(z_2+T-t_0 \right) + z_2 \right)\mathbb{I}_{\left\{T-t_0 \in [-1,-\frac{1}{2}[ \right\}} \right],
\label{int_triangle}
\end{align}
with $n = \left\lfloor T-t_0 \right\rfloor$, $\alpha = T- t_0 - \left\lfloor T-t_0 \right\rfloor$, $z_1 = \frac{1}{2} + \frac{1}{\lambda}$, $z_2 = \frac{1}{2} - \frac{1}{\lambda}$ and $z_3 = z_1 - \alpha$ and with convention, $\sum^{p}_{k=0}{e^{\lambda k}}=0$ if $p<0$. The proof leading to this expression is found in Appendix \ref{a1:Proofs}.

\underline{The spiked pattern} is written, with $a,b > 0$ and $t_0 \in [0,1[$,
\begin{equation}
\theta(t) = a + b\left(\frac{2}{1+\left|\sin(\pi(t-t_0)) \right|}-1 \right)^2.
\end{equation}
This parametric form for $\theta$ can be found in \citet{GemanRoncoroni2006} where it is used to model the time varying intensity of a jump process.
There is no closed form expression for $\hat{\theta}_T(\lambda)$.
\medskip

Figure \ref{Fig:theta_plots} presents the plots of these seasonal patterns with $t_0=\frac{7}{12}$.

\begin{figure}[H]
\centering
	\includegraphics[height=5.0cm]{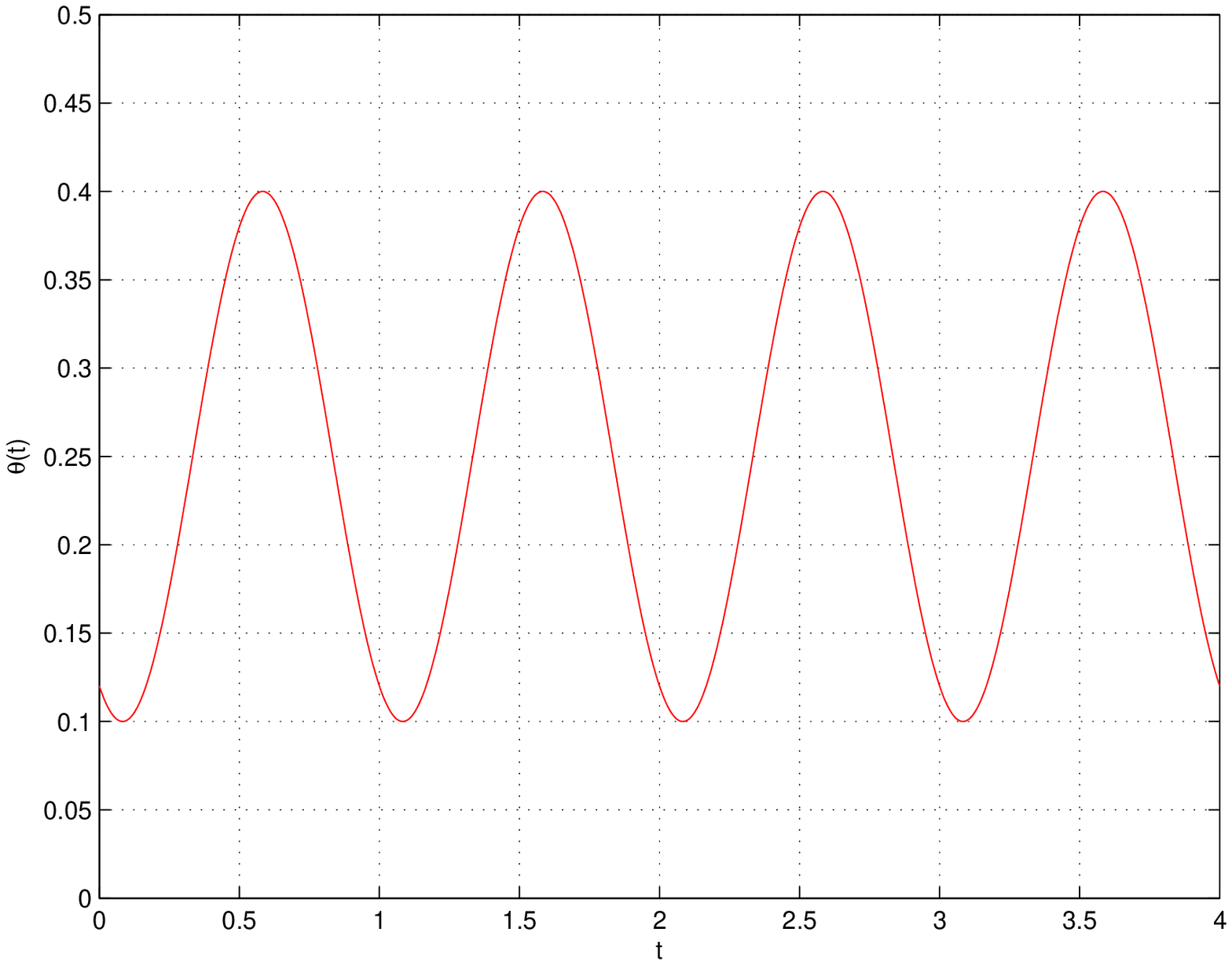} \quad
	\includegraphics[height=5.0cm]{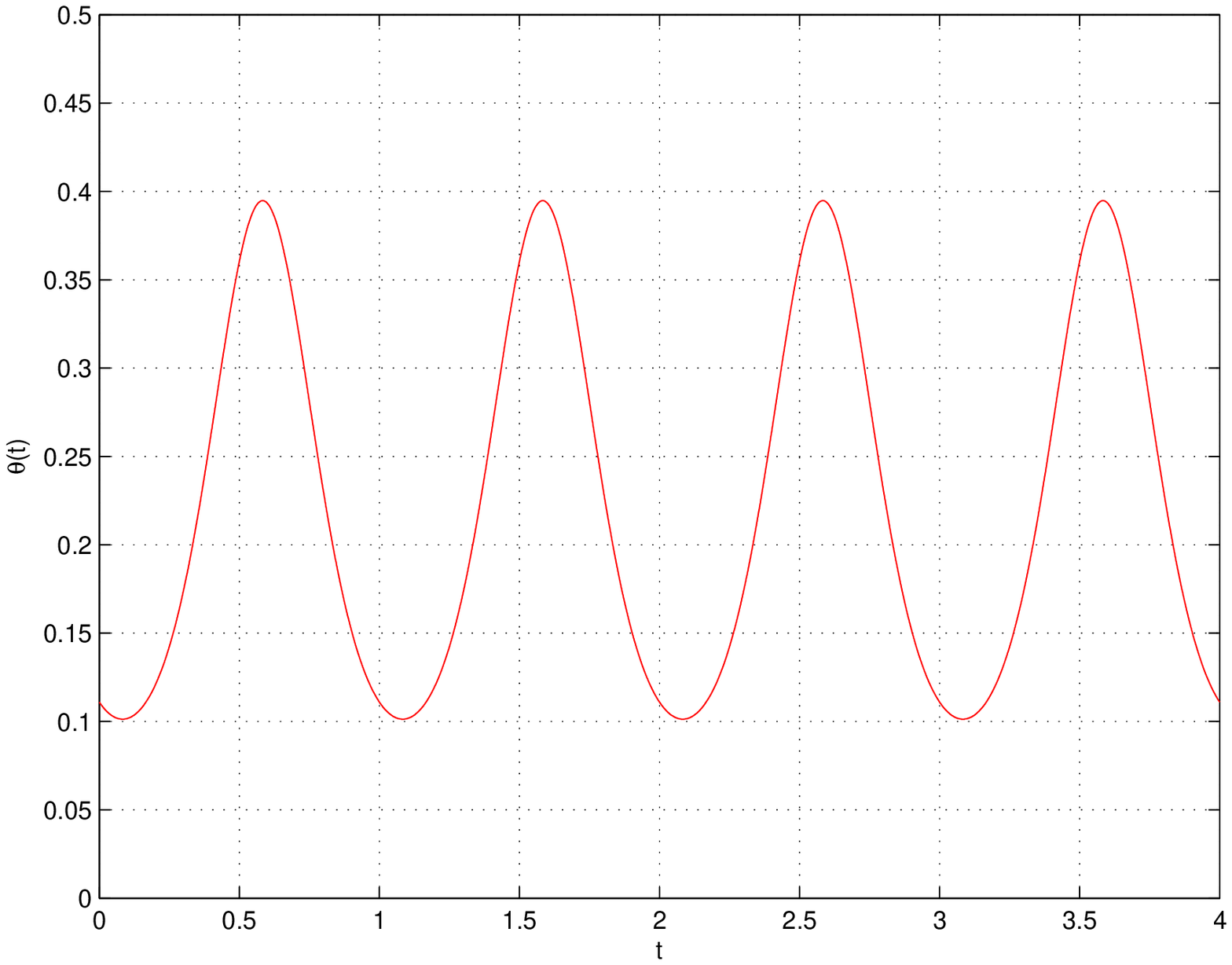}
	
	\includegraphics[height=5.0cm]{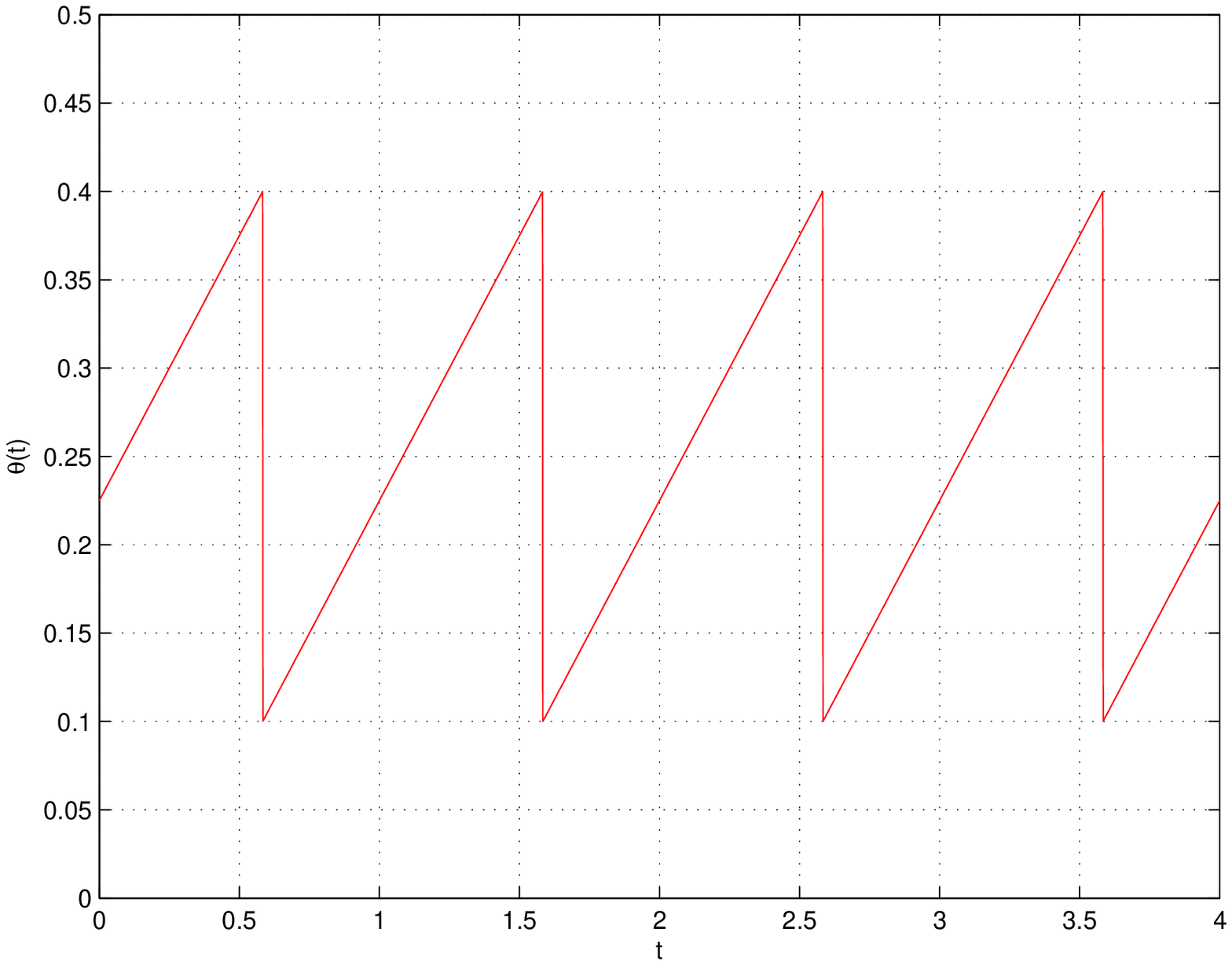} \quad
	\includegraphics[height=5.0cm]{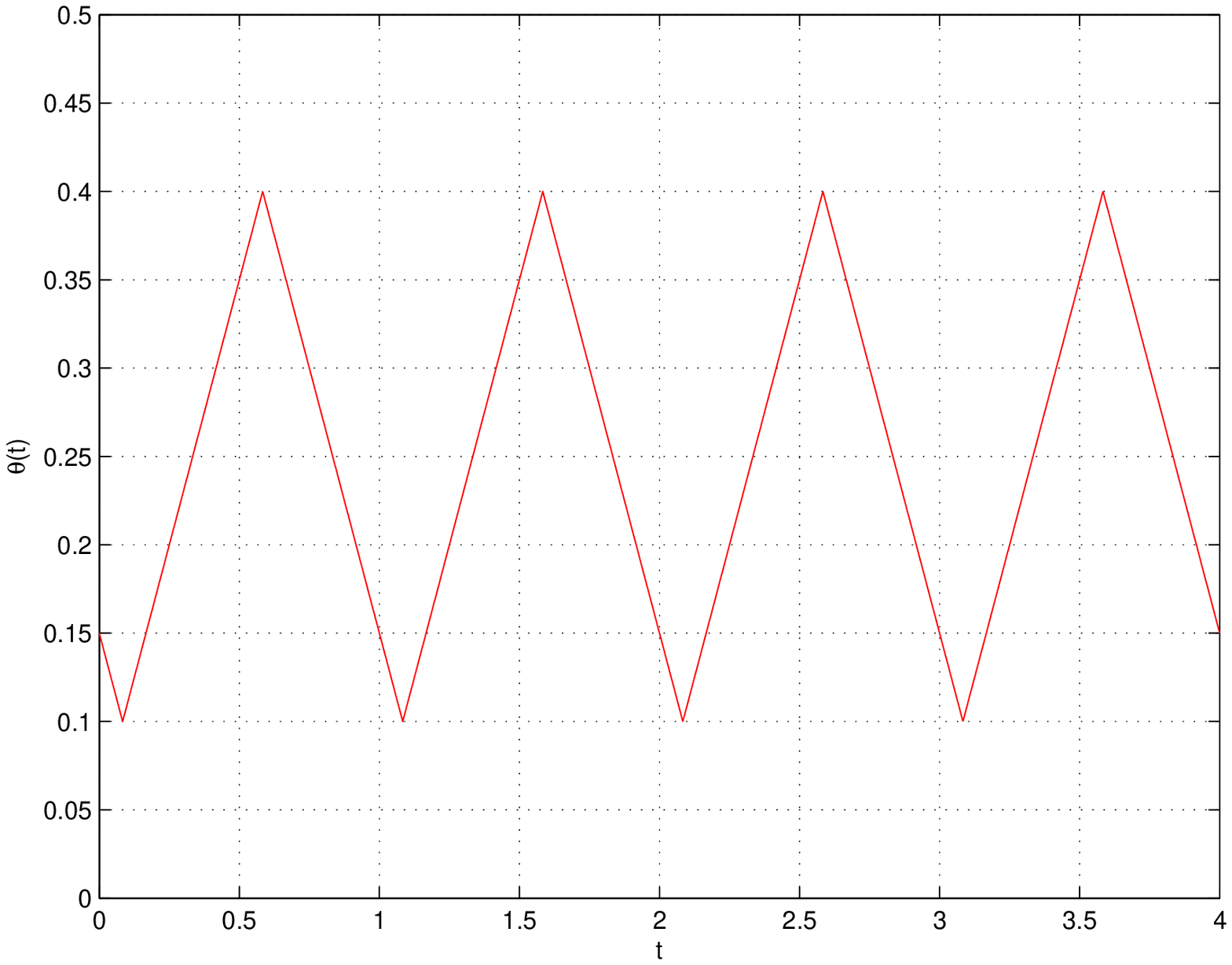}
	
	\includegraphics[height=5.0cm]{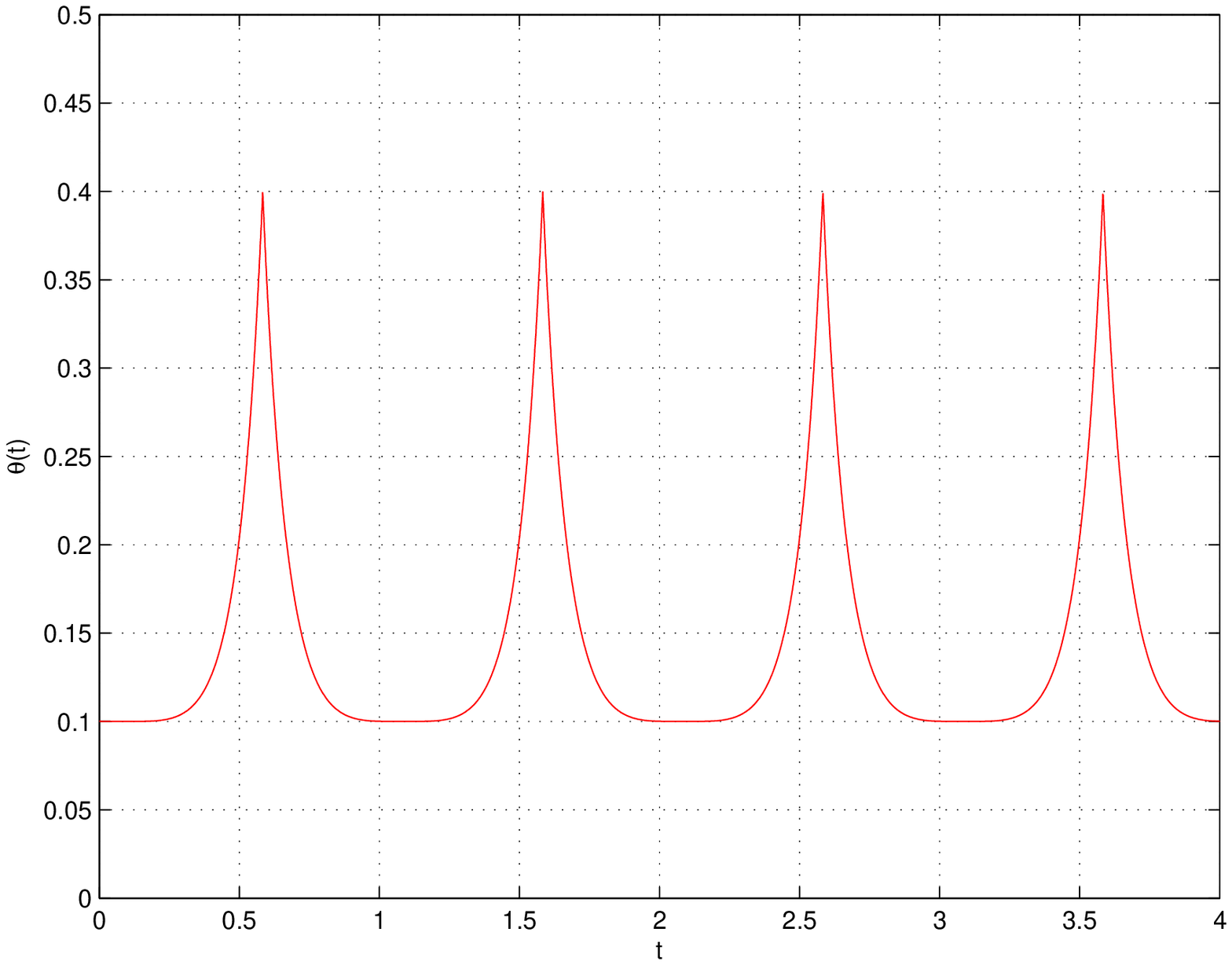}

	\caption{\label{Fig:theta_plots}
\textit{Upper left}: sinusoidal pattern. \textit{Upper right}: exponential-sinusoidal pattern.
\textit{Center left}: sawtooth pattern. \textit{Center right}: triangle pattern. \textit{Lower}: spiked pattern.}
\end{figure}


\section{Implied Volatility Smiles and Term-Structures}
\label{s:ImpliedVolatilities}

In this section we present implied volatility smiles and term-structures produced by a one-factor version of the proposed model with seasonality.
We have chosen the one-factor model here because our purpose is to illustrate the effect of seasonality on option prices.
The considered options are vanilla options on futures. Following the market convention, their maturity is the same as the maturity of the underlying futures contract.
The parameters of the considered model with seasonality are gathered in Table \ref{tab:ModelParameters}.
Only the parameters corresponding to the seasonality pattern change from one setting to another. The seasonality functions $\theta$ obtained with these parameters are those shown in Figure \ref{Fig:theta_plots}.
This numerical application is presented for an illustrative purpose and we take an initial futures curve that is flat in price at $100$ USD.

\begin{table}[htbp]
  \centering
  \caption{Model parameters for different specifications of the seasonality functions used in the numerical illustrations.}
    \begin{tabular}{cccccc}
    \addlinespace
    \toprule
		& \multicolumn{5}{c}{seasonality function} \\
		\cmidrule(lr){2-6}
		parameters & sinusoid & exp-sinusoid & sawtooth & triangle & spiked \\
		\midrule
		
		$v_0$ & $0.10$ & $0.10$ & $0.10$ & $0.10$ & $0.10$ \\
		$\lambda$ &$1.00$ & $1.00$ & $1.00$ & $1.00$ & $1.00$ \\
 		$\kappa$ &$0.80$ & $0.80$ & $0.80$ & $0.80$ & $0.80$ \\
		$\sigma$ &$1.20$ & $1.20$ & $1.20$ & $1.20$ & $1.20$ \\
		$\rho$ &$-0.25$ & $-0.25$ & $-0.25$ & $-0.25$ & $-0.25$ \\
		$a$ & $0.25$ & $0.20$ & $0.10$ & $0.10$ & $0.10$ \\
		$b$ & $0.15$ & $0.68$ & $0.30$ & $0.60$ & $0.30$ \\
		$t_0$ & $7/12$ & $7/12$ & $7/12$ & $7/12$ & $7/12$ \\
				
    \bottomrule
    \end{tabular}
  \label{tab:ModelParameters}
\end{table}

In Figures \ref{Fig:implied_vol1} and \ref{Fig:implied_vol2} we present, for the different seasonality patterns,
the implied volatility smiles obtained for different maturities and the term-structure of implied volatility for at-the-money options.
The obtained term-structures exhibit both seasonality and the Samuelson effect.
The obtained strike-structures exhibit smiled shapes.
The implied volatility produced with the sinusoidal and exponential-sinusoidal patterns are similar to each other.
For the sawtooth, triangle and spiked patterns, the irregularities of the function $\theta$ seem not to be transferred to the implied volatility.
It can also be observed that the choice of the seasonality pattern seems to have very little impact on the shape of the volatility smile.
This last remark is particularly striking if one compares the volatility term-structures produced by the sinusoidal and triangle patterns.

\begin{figure}[H]
\centering
	\includegraphics[height=5.0cm]{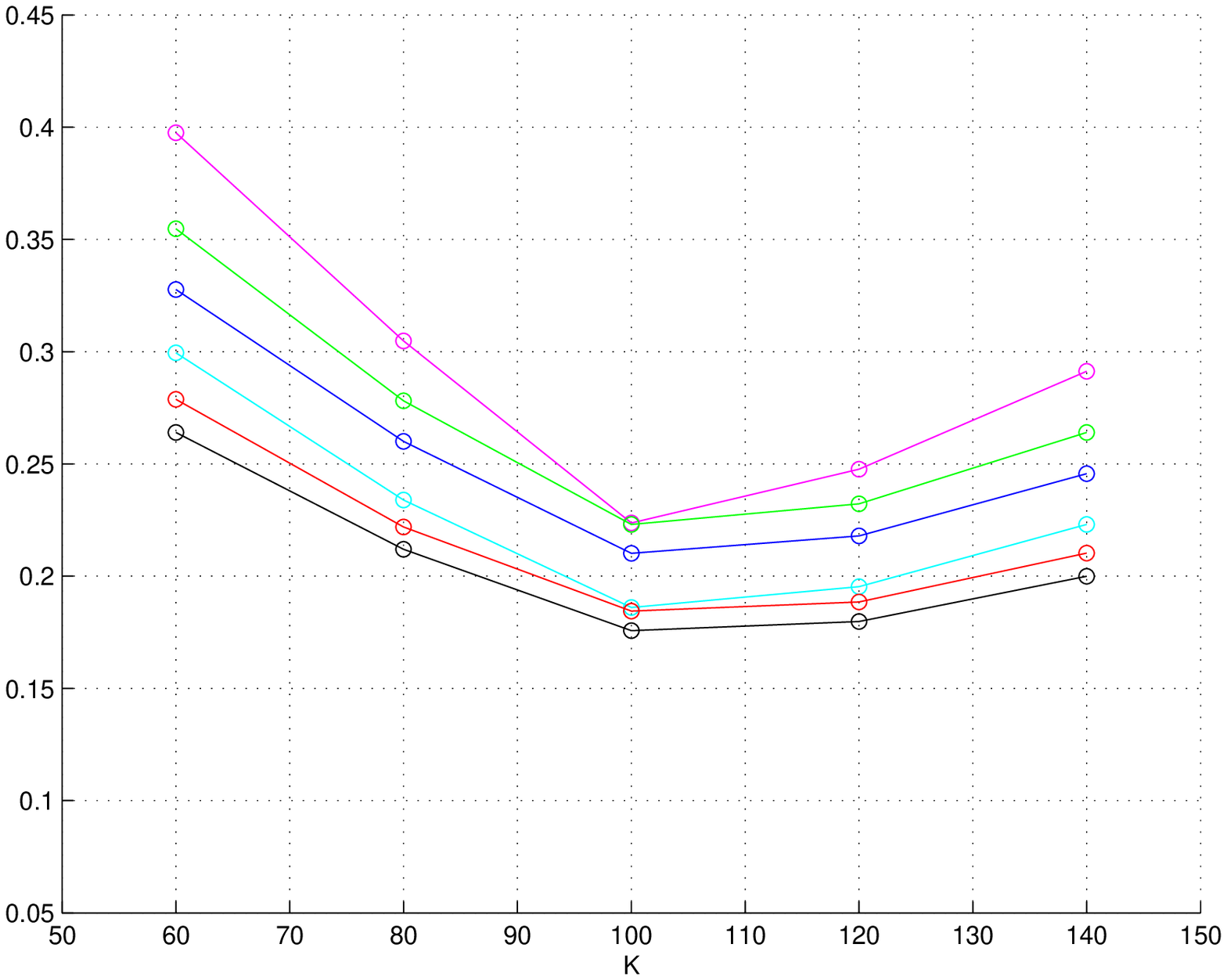} \quad
	\includegraphics[height=5.0cm]{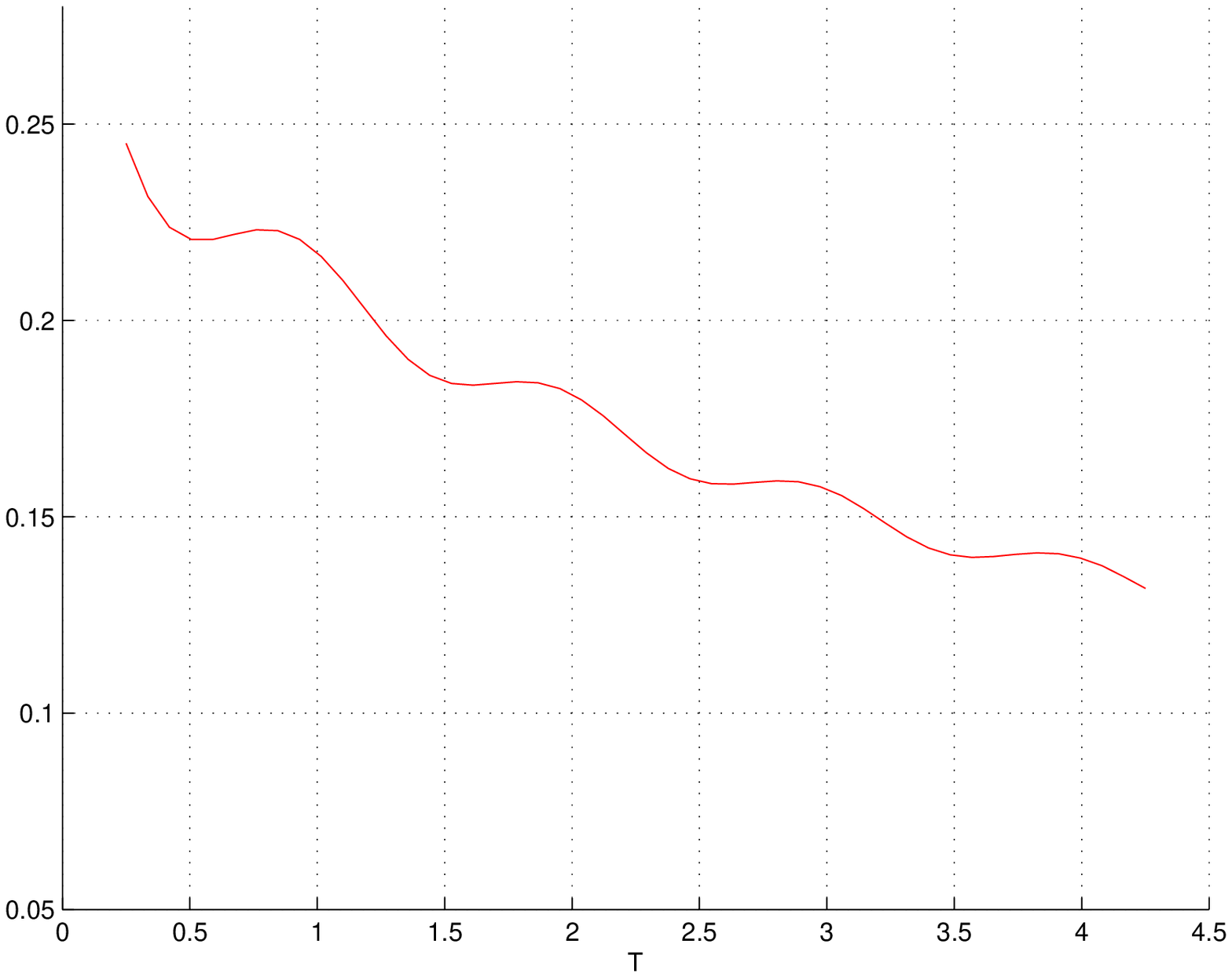}
	
	\includegraphics[height=5.0cm]{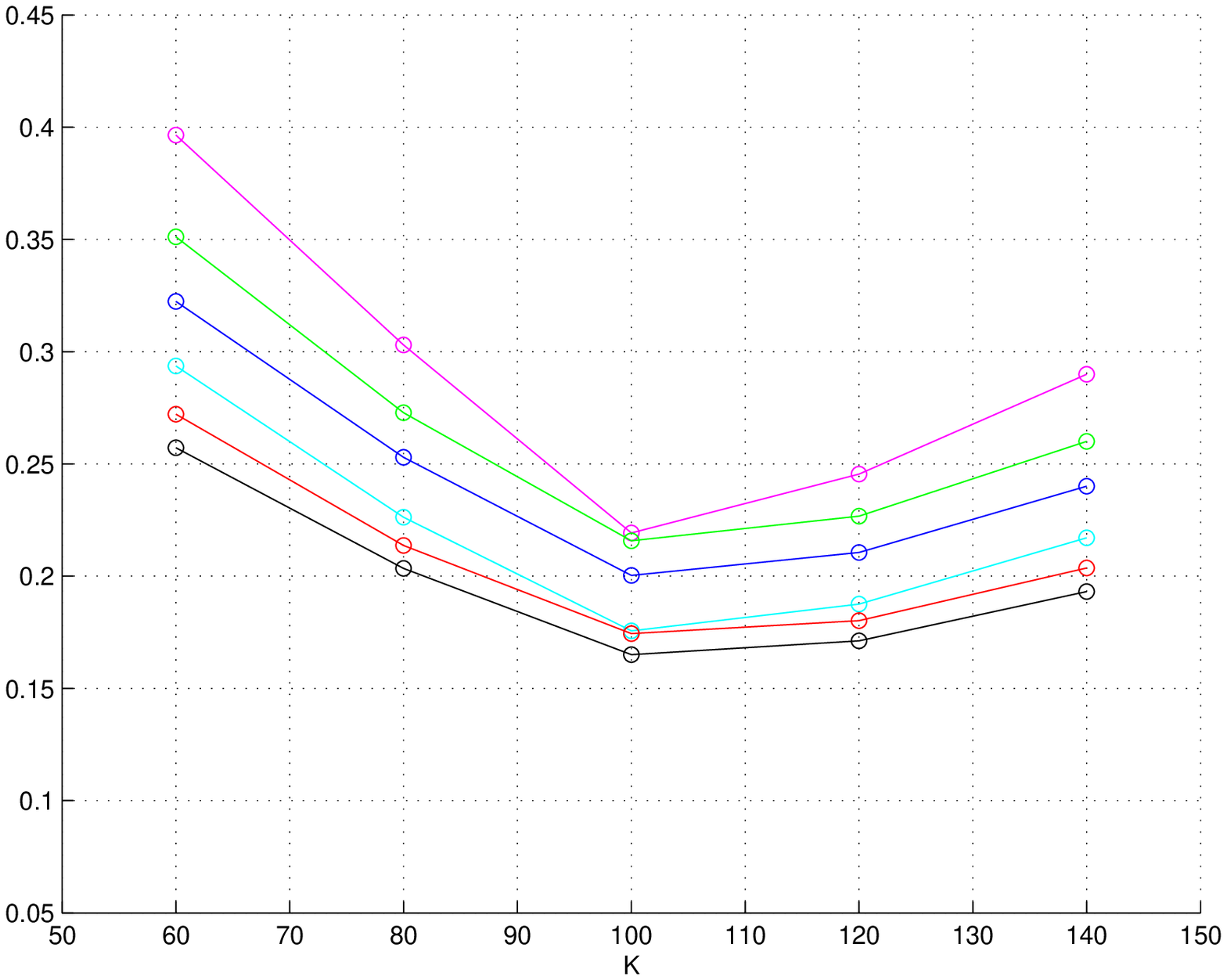} \quad
	\includegraphics[height=5.0cm]{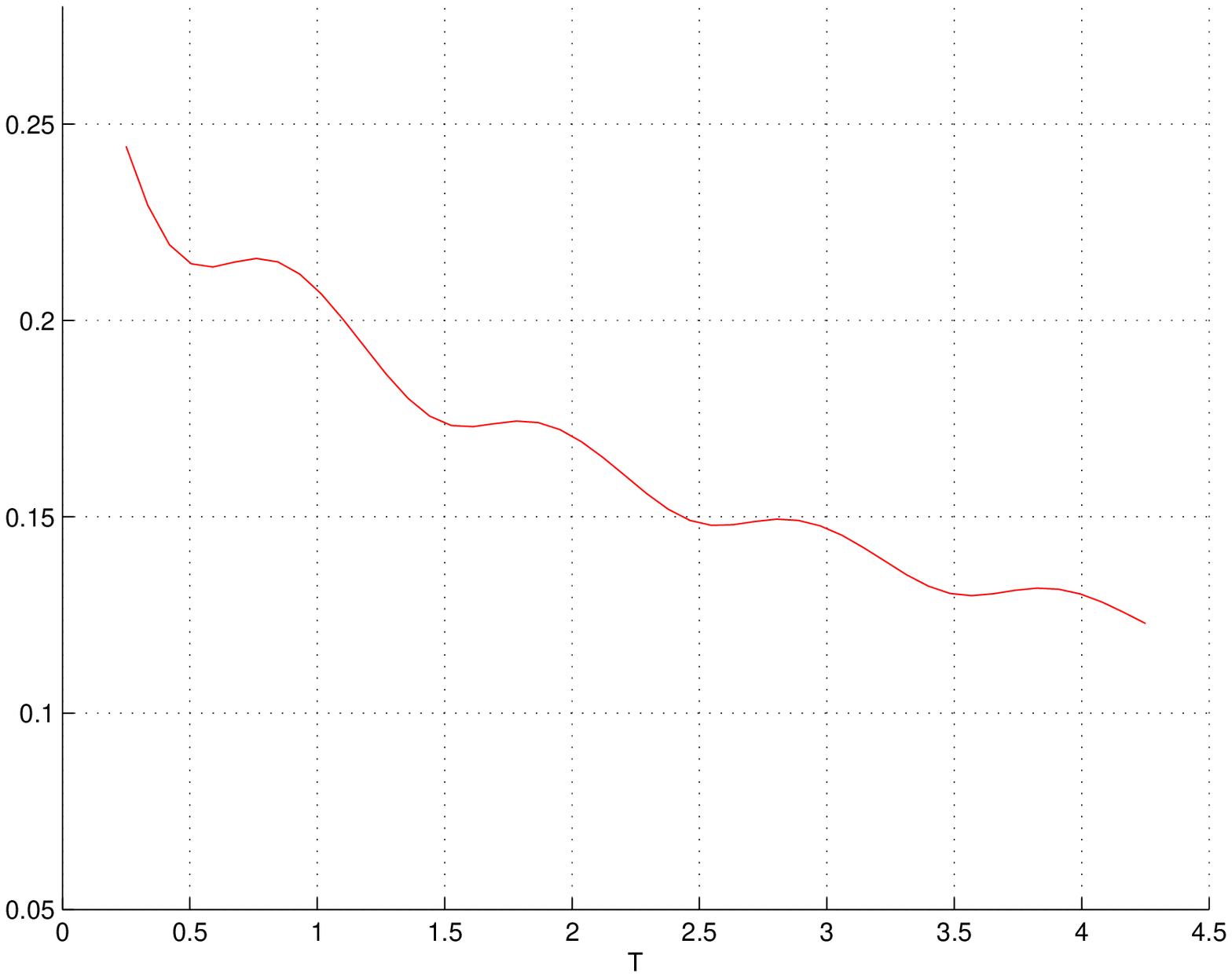}
	
		\caption{\label{Fig:implied_vol1}
\textit{Upper panel}: implied volatility smiles (left) and term-structure (right) obtained using the sinusoidal seasonality pattern with parameters in Table \ref{tab:ModelParameters}.
\textit{Lower panel}: implied volatility smiles (left) and term-structure (right) obtained using the exponential-sinusoidal pattern with parameters in Table \ref{tab:ModelParameters}.}
\end{figure}

\begin{figure}[H]
\centering
	\includegraphics[height=5.0cm]{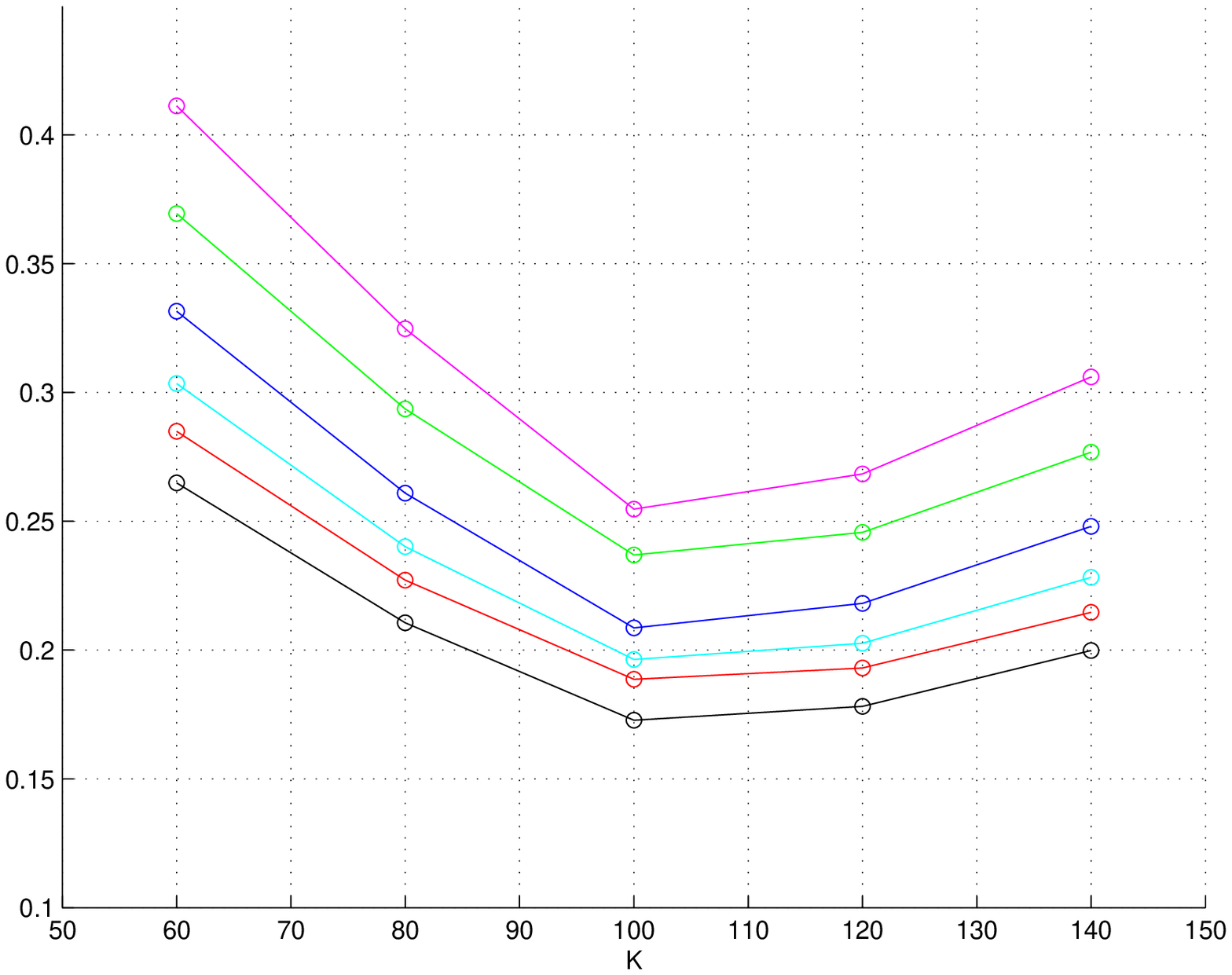} \quad
	\includegraphics[height=5.0cm]{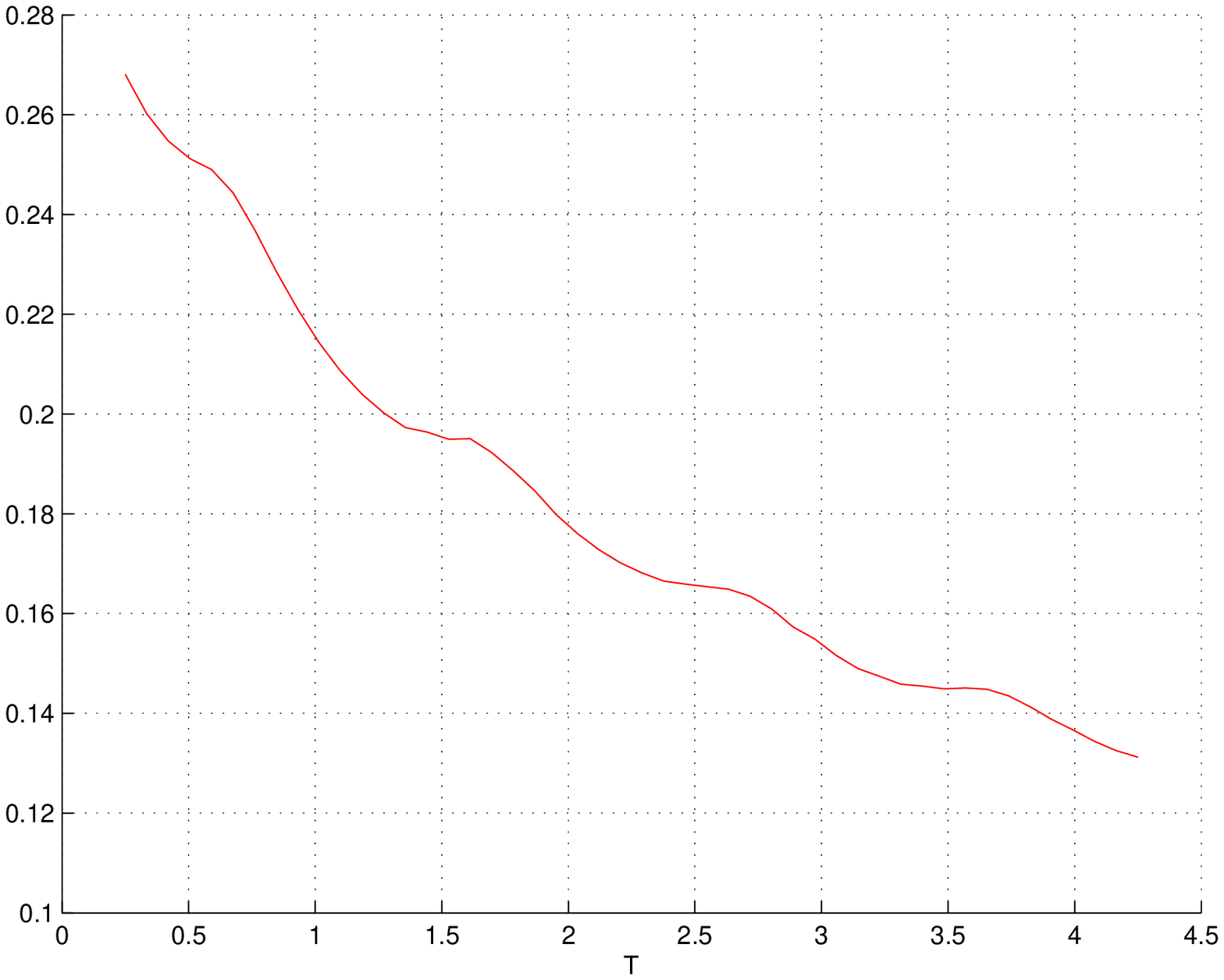}
	
	\includegraphics[height=5.0cm]{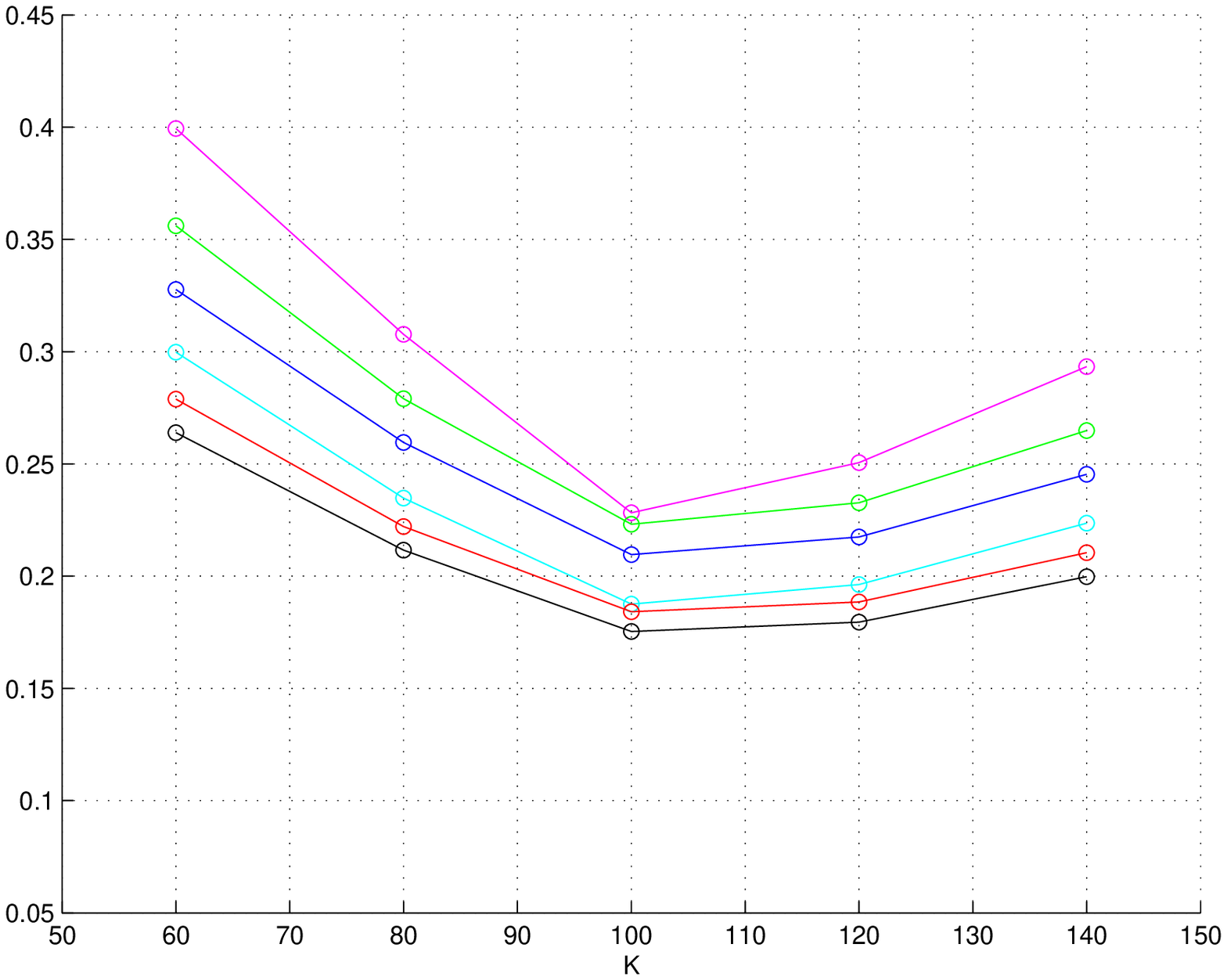} \quad
	\includegraphics[height=5.0cm]{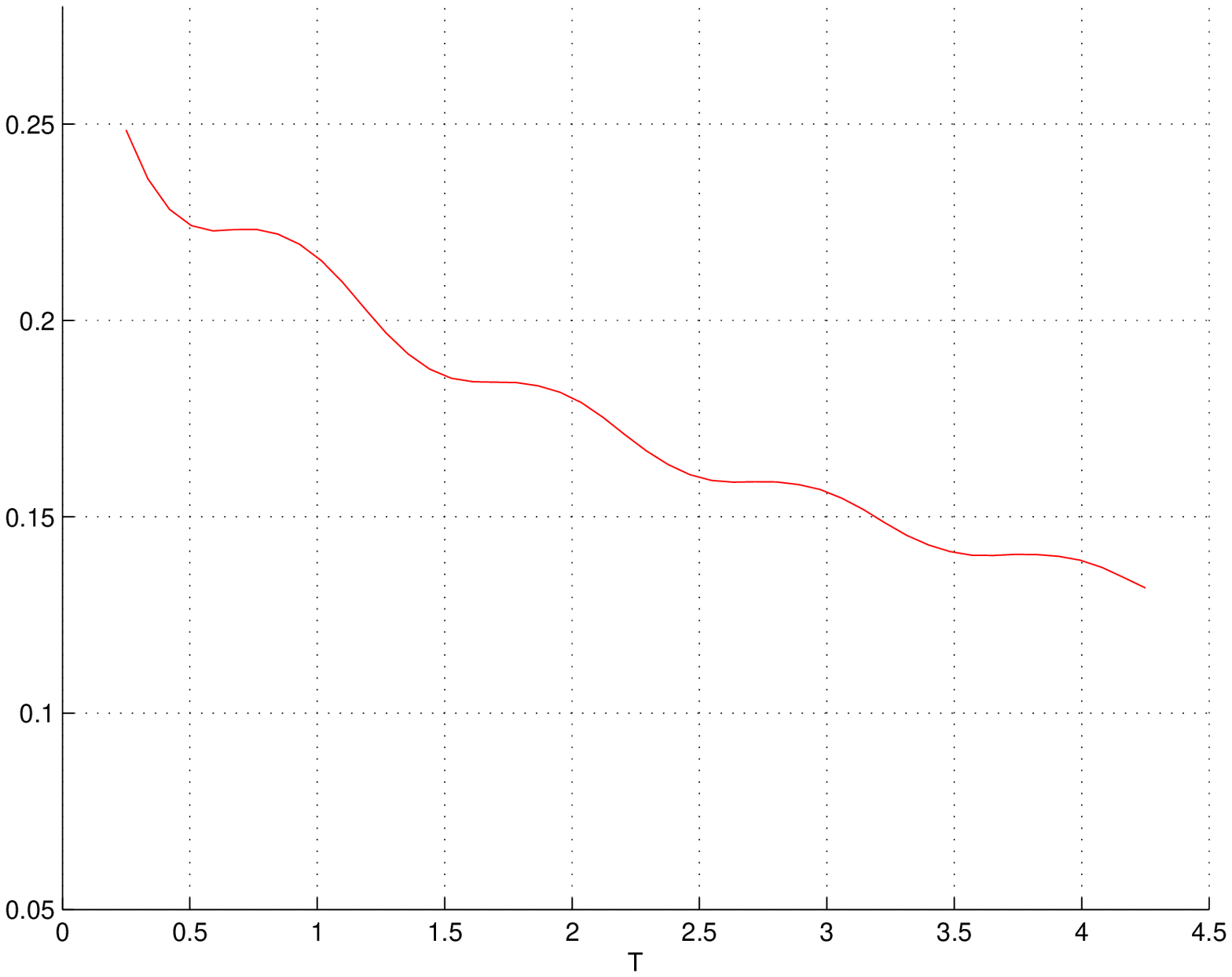}
	
	\includegraphics[height=5.0cm]{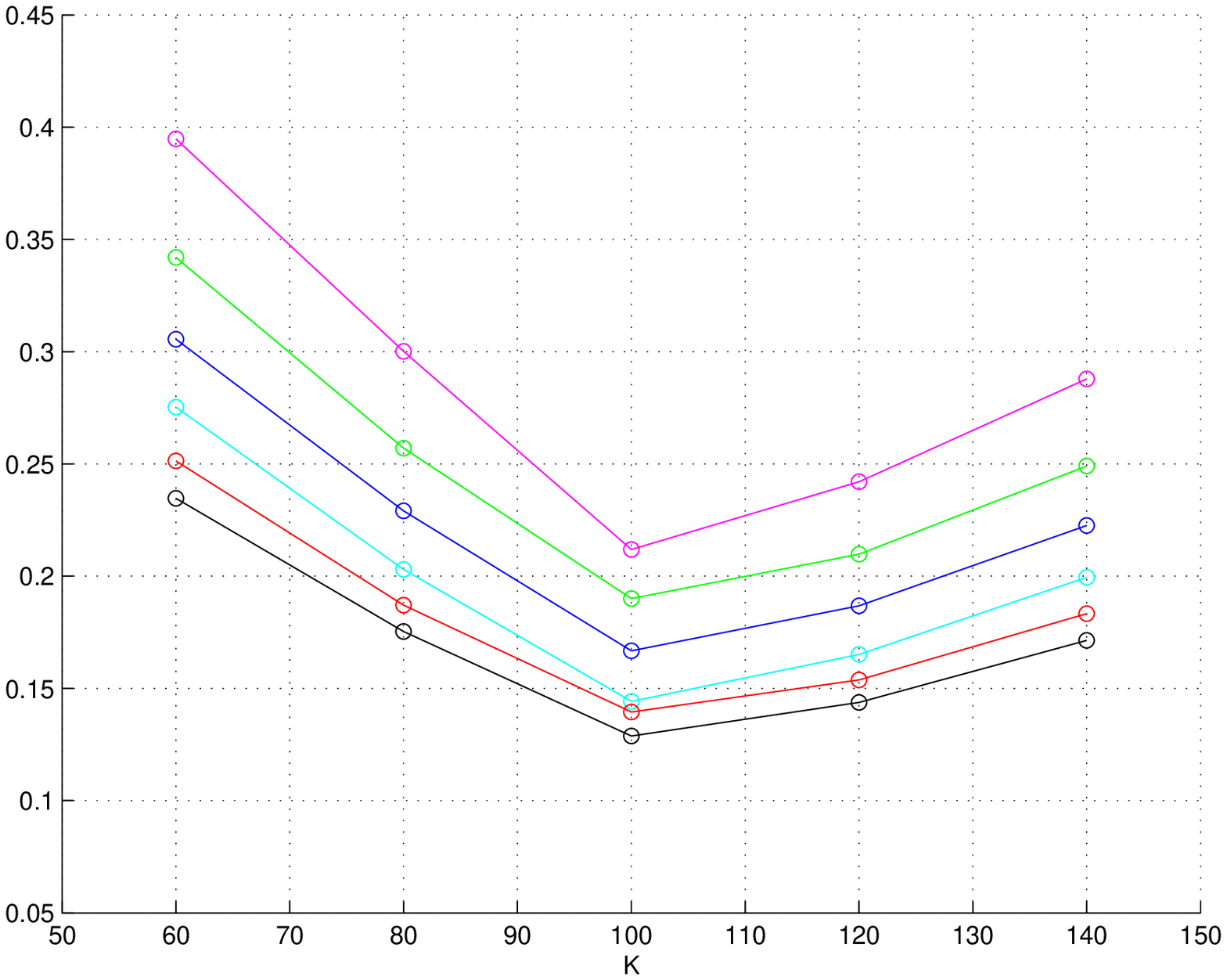} \quad
	\includegraphics[height=5.0cm]{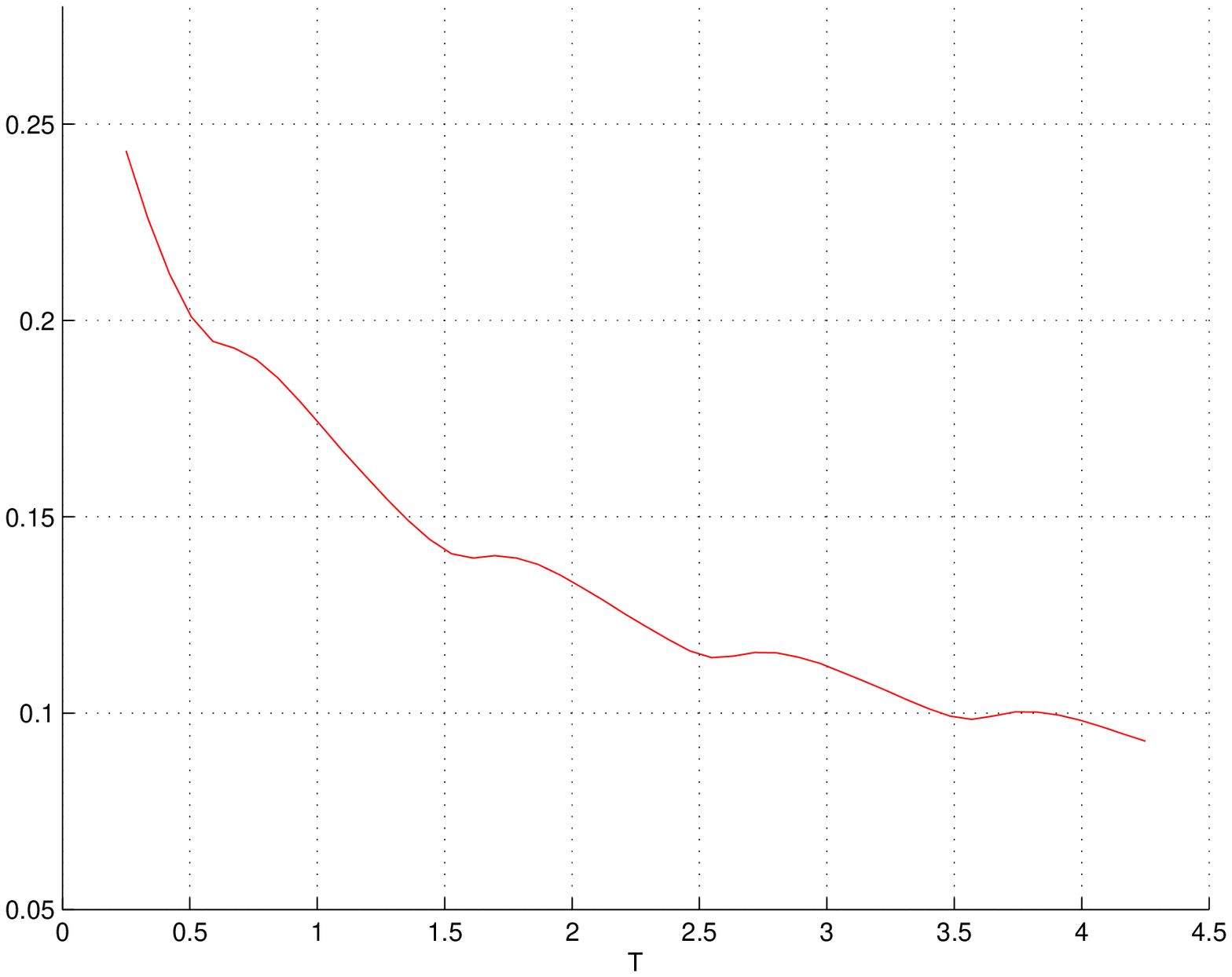}
	
	\caption{\label{Fig:implied_vol2}
\textit{Upper panel}: implied volatility smiles (left) and term-structure (right) obtained using the sawtooth seasonality pattern with parameters in Table \ref{tab:ModelParameters}.
\textit{Center panel}: implied volatility smiles (left) and term-structure (right) obtained using the triangle seasonality pattern with parameters in Table \ref{tab:ModelParameters}.
\textit{Lower panel}: implied volatility smiles (left) and term-structure (right) obtained using the spiked seasonality pattern with parameters in Table \ref{tab:ModelParameters}.}
\end{figure}

\section{Seasonal Stochastic Correlation and Calendar Spread Option Prices in the Multi-Factor Model}
\label{s:SeasonalStochasticCorrelation}

\subsection{Seasonal Stochastic Correlation}
\label{ss:SeasonalStochasticCorrelation}

We will show in this section that if we specify our model with two or more volatility factors,
then the correlation of the returns of two given futures contracts is influenced by the seasonality functions.
In other words, the correlation becomes seasonal.

Let us take the $2$-factor model for an illustration.
Futures returns follow the SDE
\begin{equation}
\label{FuturesSDE_2FactorModel}
\frac{dF(t, T_m)}{F(t, T_m)} = e^{-\lambda_1(T_m - t)} \sqrt{v_1(t)} dB_1(t) + e^{-\lambda_2(T_m - t)} \sqrt{v_2(t)} dB_2(t).
\end{equation}
and the two variance processes follow the SDEs
\begin{align}
\label{VarianceSDE1_2FactorModel}
dv_1(t) &= \kappa_1 \left( \theta_1(t) - v_1(t) \right) dt + \sigma_1 \sqrt{v_1(t)} dB_3(t),
\\
\label{VarianceSDE2_2FactorModel}
dv_2(t) &= \kappa_2 \left( \theta_2(t) - v_2(t) \right) dt + \sigma_2 \sqrt{v_2(t)} dB_4(t).
\end{align}
The correlations are given by $\langle dB_1(t), dB_3(t) \rangle = \rho_1 dt$, $\langle dB_2(t), dB_4(t) \rangle = \rho_2 dt$,
and all other correlations are zero.

Define
\begin{equation}
\label{Vij}
V_{ij}(t) := \langle \frac{dF(t, T_i)}{F(t, T_i)}, \frac{dF(t, T_j)}{F(t, T_j)} \rangle / dt.
\end{equation}
Then the instantaneous correlation $\rho(t)$ at time $t$ is given by:
\begin{equation}
\label{rho_tT1T2}
\rho(t) = \frac{V_{12}(t)}{\sqrt{V_{11}(t)} \sqrt{V_{22}(t)}}.
\end{equation}

Using \eqref{Vij} together with \eqref{FuturesSDE_2FactorModel} gives for the instantaneous correlation
\begin{equation}
\label{rho_2FactorModel}
\rho(t) =
\frac{e^{-\lambda_1(T_1 + T_2 - 2t)} v_1(t) + e^{-\lambda_2(T_1 + T_2 - 2t)} v_2(t)}
{\sqrt{e^{-2 \lambda_1(T_1 - t)} v_1(t) + e^{-2 \lambda_2(T_1 - t)} v_2(t)}
\sqrt{e^{-2 \lambda_1(T_2 - t)} v_1(t) + e^{-2 \lambda_2(T_2 - t)} v_2(t)}}.
\end{equation}
In contrast to the $1$-factor model, the instantaneous correlation $\rho(t)$ in the $n$-factor model, with $n \geq 2$, is stochastic.

To illustrate the seasonality of the correlation function $\rho$, we consider the case $\sigma_1 = \sigma_2 = 0$
when both variances $v_1$ and $v_2$ are deterministic functions of time $t$.
The two SDEs \eqref{VarianceSDE} then become ordinary differential equations
\begin{equation}
\label{VarianceODE}
v'_j(t) = \kappa_j \left( \theta_j(t) - v_j(t) \right), \; v_j(0) = v_{j,0} > 0.
\end{equation}
It is straightforward to see that the solution to \eqref{VarianceODE} is given by
\begin{align}
v_j(t) & = e^{-\kappa_j t} \left( v_{j,0} + \kappa_j \int_0^t e^{\kappa_j s} \theta_j(s) ds \right)
\\
\label{VarianceODESolution}
 & = e^{-\kappa_j t} \left( v_{j,0} + \kappa_j  \hat{\theta}_t(\kappa_j) \right).
\end{align}
Note that the same transform function $\hat{\theta}_t(\kappa) = \int_0^t e^{\kappa s} \theta(s) ds$
already introduced in \eqref{thetaTransform} appears again.
For several specifications of $\theta$, the function $\hat{\theta}_t$ is available in closed form, and it is easy
to plot the correlation function $\rho$ given by \eqref{rho_2FactorModel}.

Figure \ref{Fig:rho_inst1} presents the instantaneous correlation obtained with the two-factor model with seasonality in the special case $\sigma_1 = \sigma_2 = 0$.
It also plots the correlation obtained with the version without seasonality and the other terms involved in expression \eqref{rho_2FactorModel}.
For the seasonal model, only the first factor is seasonal and follows the sinusoidal pattern \eqref{sinusoid}.
To produce Figure \ref{Fig:rho_inst1}, we consider two cases in order to illustrate different seasonal patterns of the correlation when compared to the non-seasonal case.
The non-seasonal case is our benchmark, which is obtained by setting $b=0$ in the seasonality function of each factor.
The model parameters for these cases are gathered in Table \ref{tab:ModelParameters2}.

\begin{table}[htbp]
  \centering
  \caption{Model parameters in the two cases used to illustrate the seasonal behavior of the instantaneous correlation in the two-factor version of the model.}
    \begin{tabular}{ccc}
    \addlinespace
    \toprule
		parameters & Case 1 & Case 2 \\
		\midrule
		
		$v_{01}$ & $0.10$ & $0.06$ \\
		$v_{02}$ & $0.04$ & $0.04$ \\
		$\lambda_1$ &$2.00$ & $0.50$ \\
		$\lambda_2$ &$0.50$ & $2.00$ \\
 		$\kappa_1$ &$1.00$ & $1.00$ \\
		$\kappa_2$ &$1.00$ & $1.00$ \\
		$\sigma_1$ &$0.00$ & $0.00$ \\
		$\sigma_2$ &$0.00$ & $0.00$ \\
		$\rho_1$ &$0.00$ & $0.00$ \\
		$\rho_2$ &$0.00$ & $0.00$ \\
		$a_1$ & $0.10$ & $0.06$ \\
		$a_2$ & $0.04$ & $0.04$ \\
		$b_1$ & $0.09$ & $0.05$ \\
		$b_2$ & $0.00$ & $0.00$ \\
		$t_{01}$ & $0.00$ & $0.00$ \\
		$t_{02}$ & $0.00$ & $0.00$ \\
				
    \bottomrule
    \end{tabular}
  \label{tab:ModelParameters2}
\end{table}

\begin{figure}[H]
\centering
	\includegraphics[height=5.0cm]{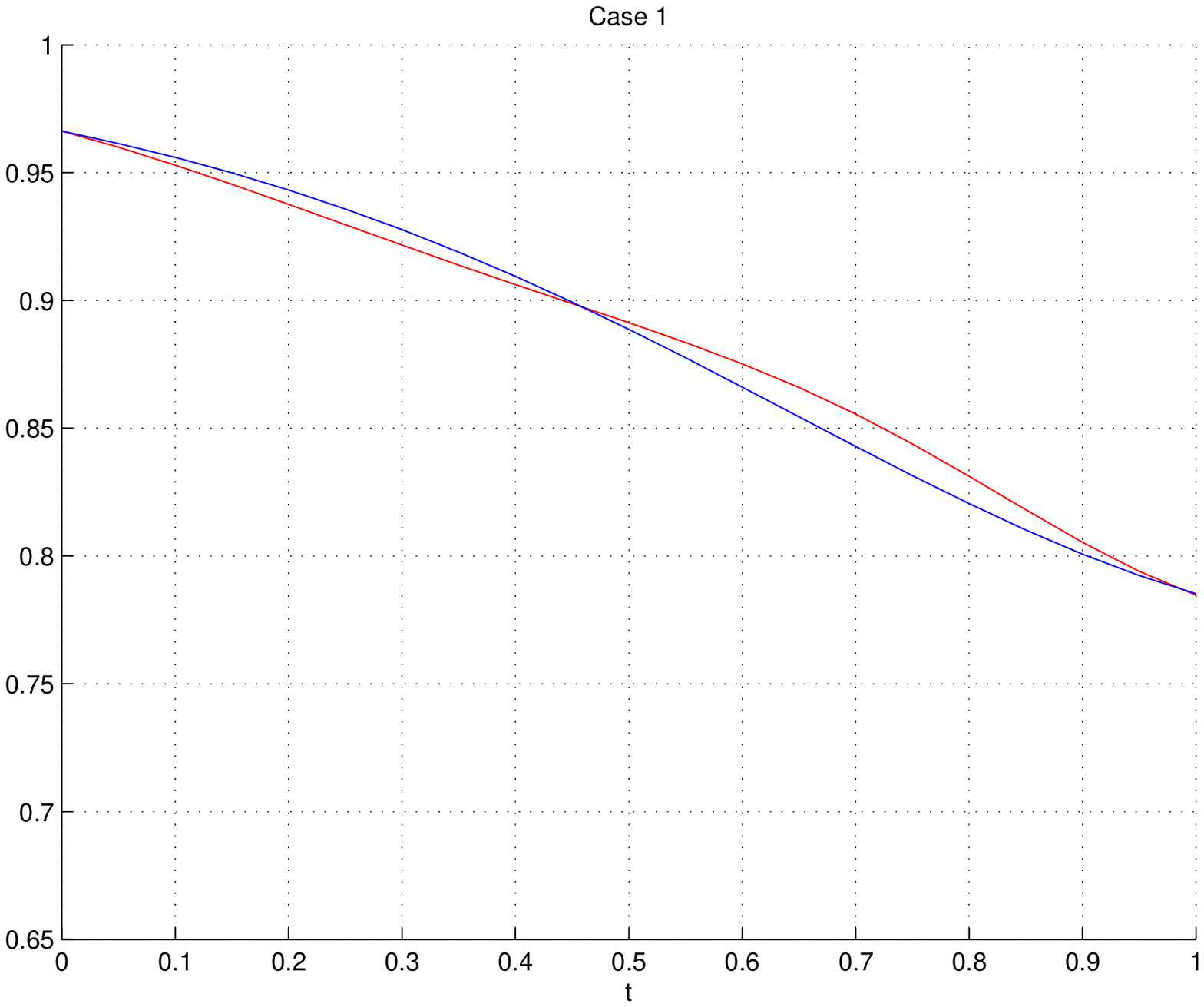} \quad
	\includegraphics[height=5.0cm]{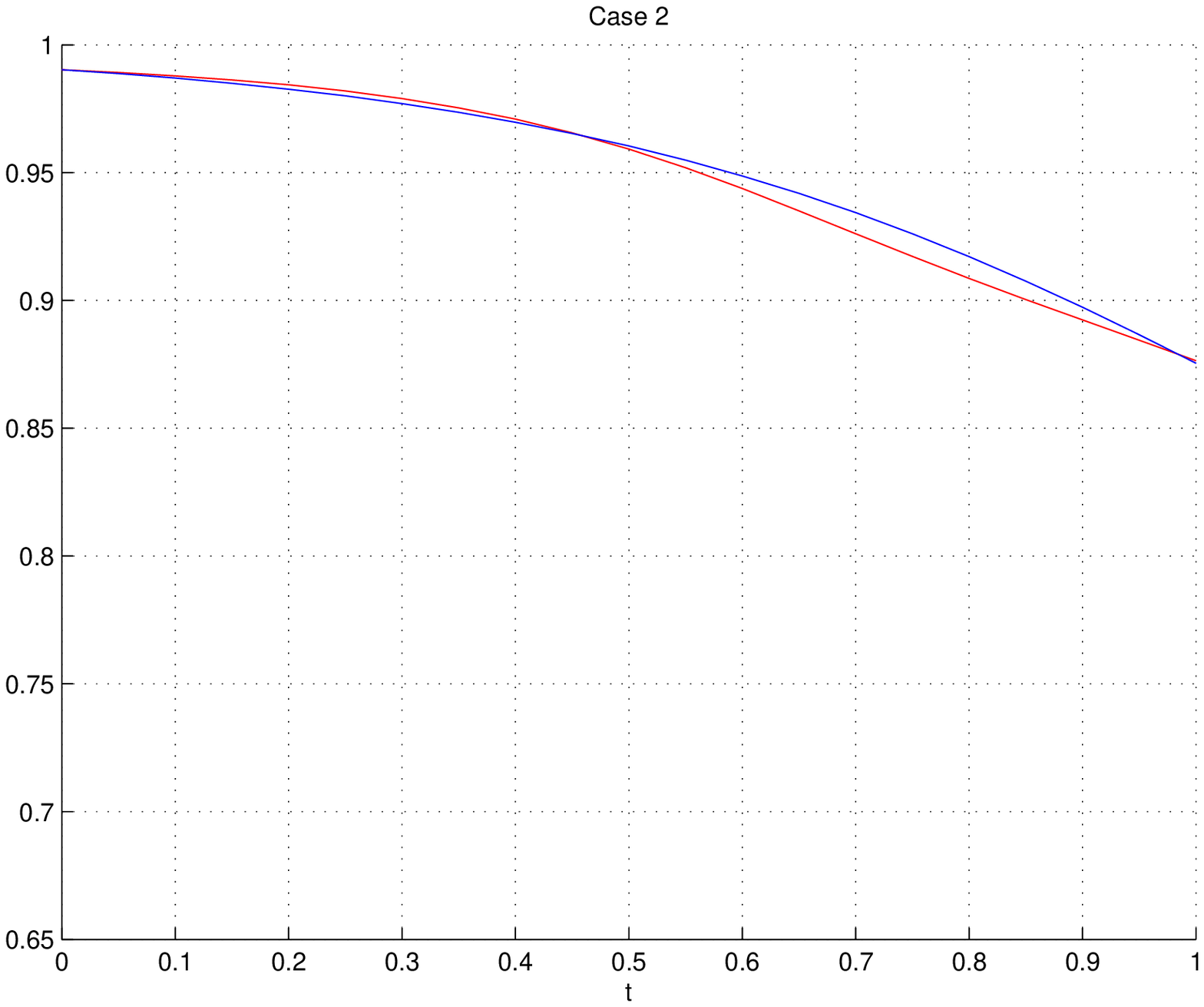}
	
	\includegraphics[height=5.0cm]{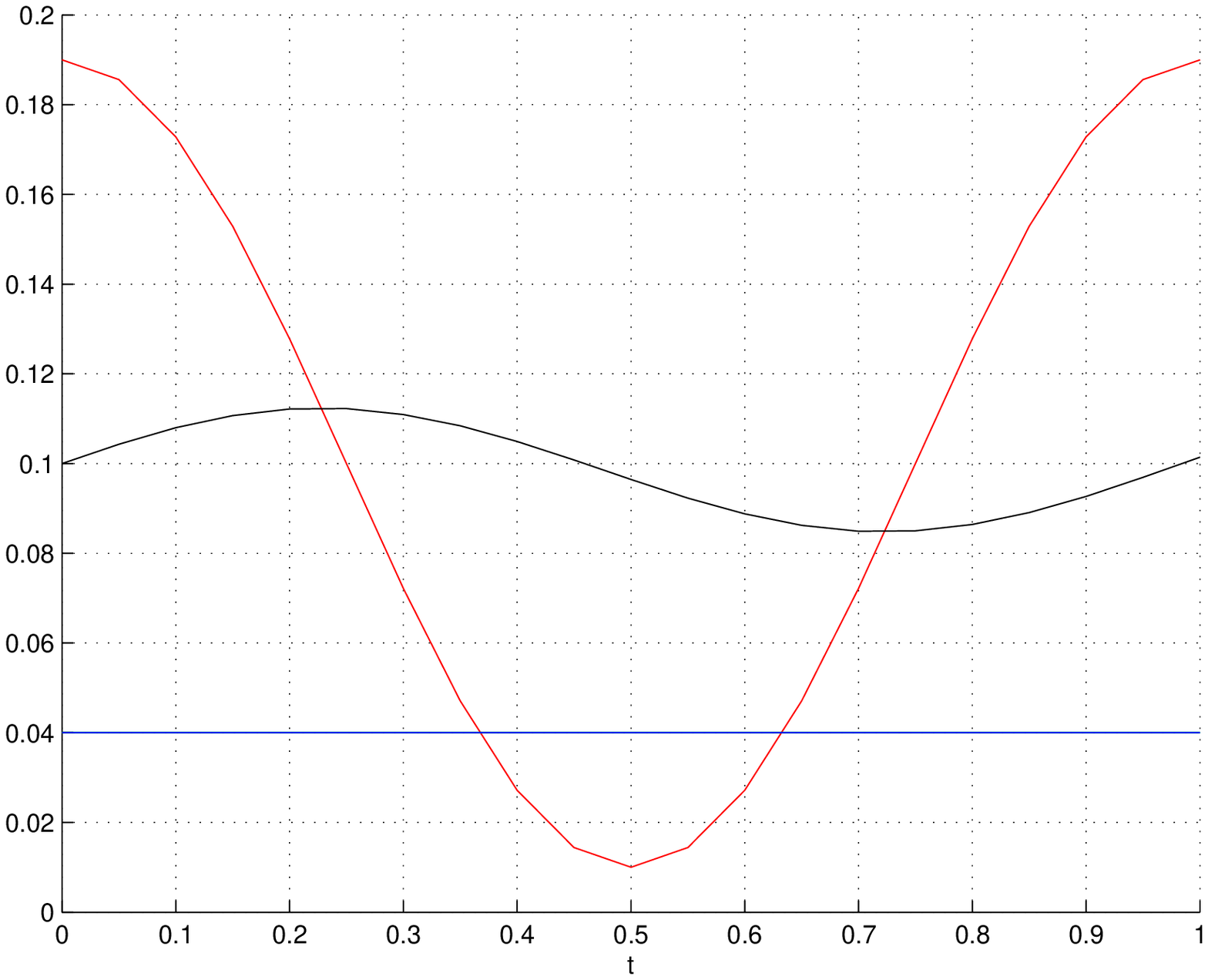} \quad
	\includegraphics[height=5.0cm]{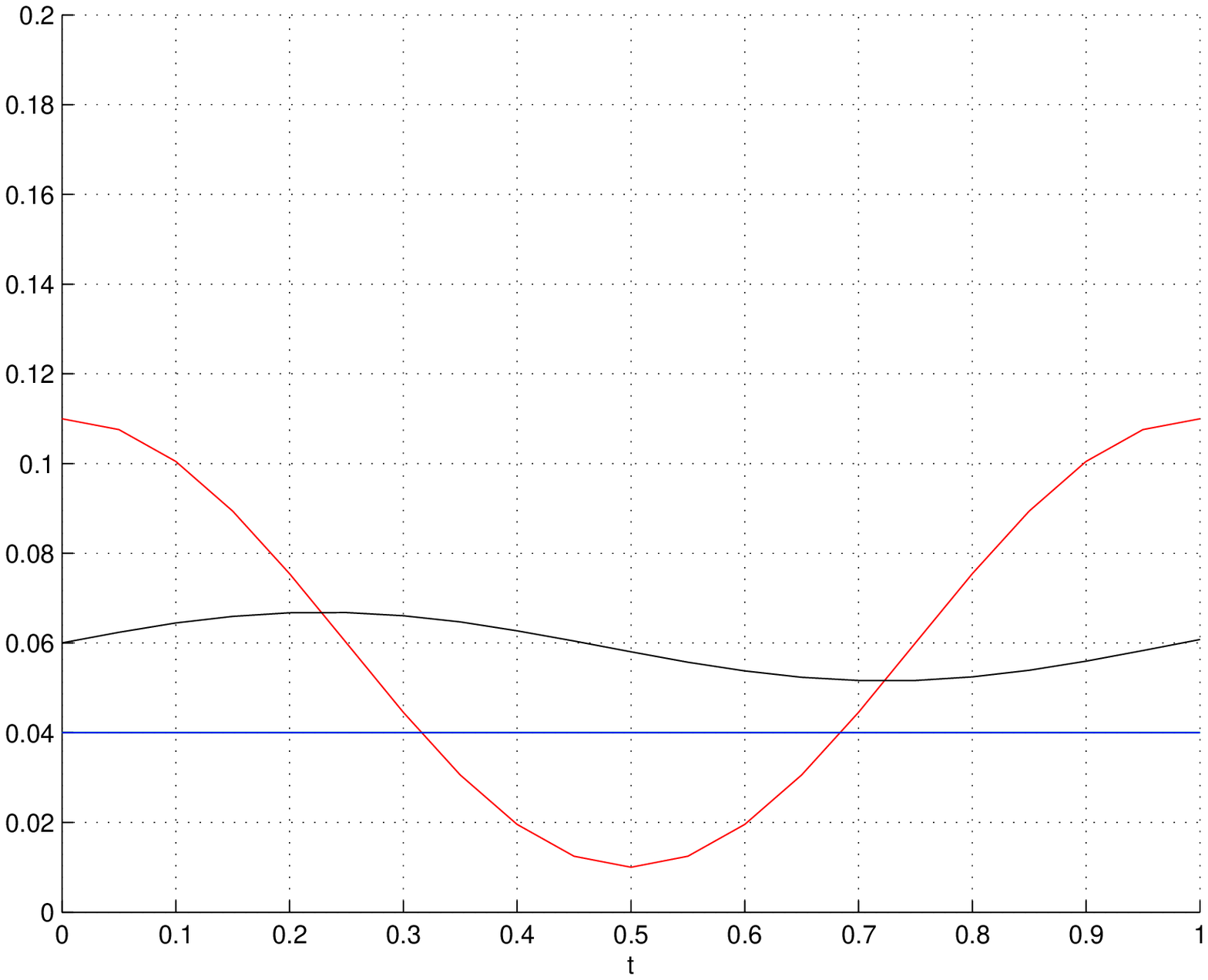}
	
	\includegraphics[height=5.0cm]{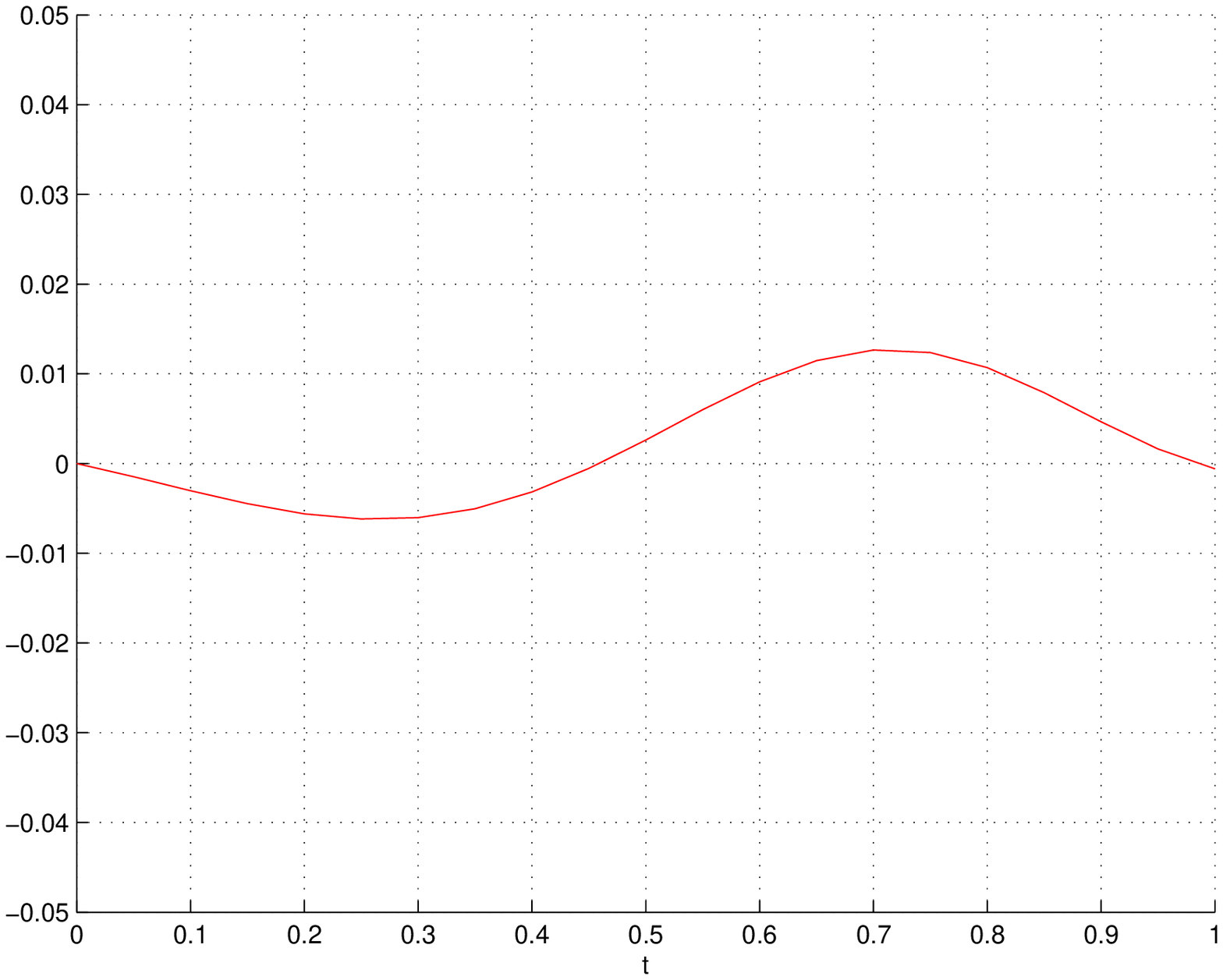} \quad
	\includegraphics[height=5.0cm]{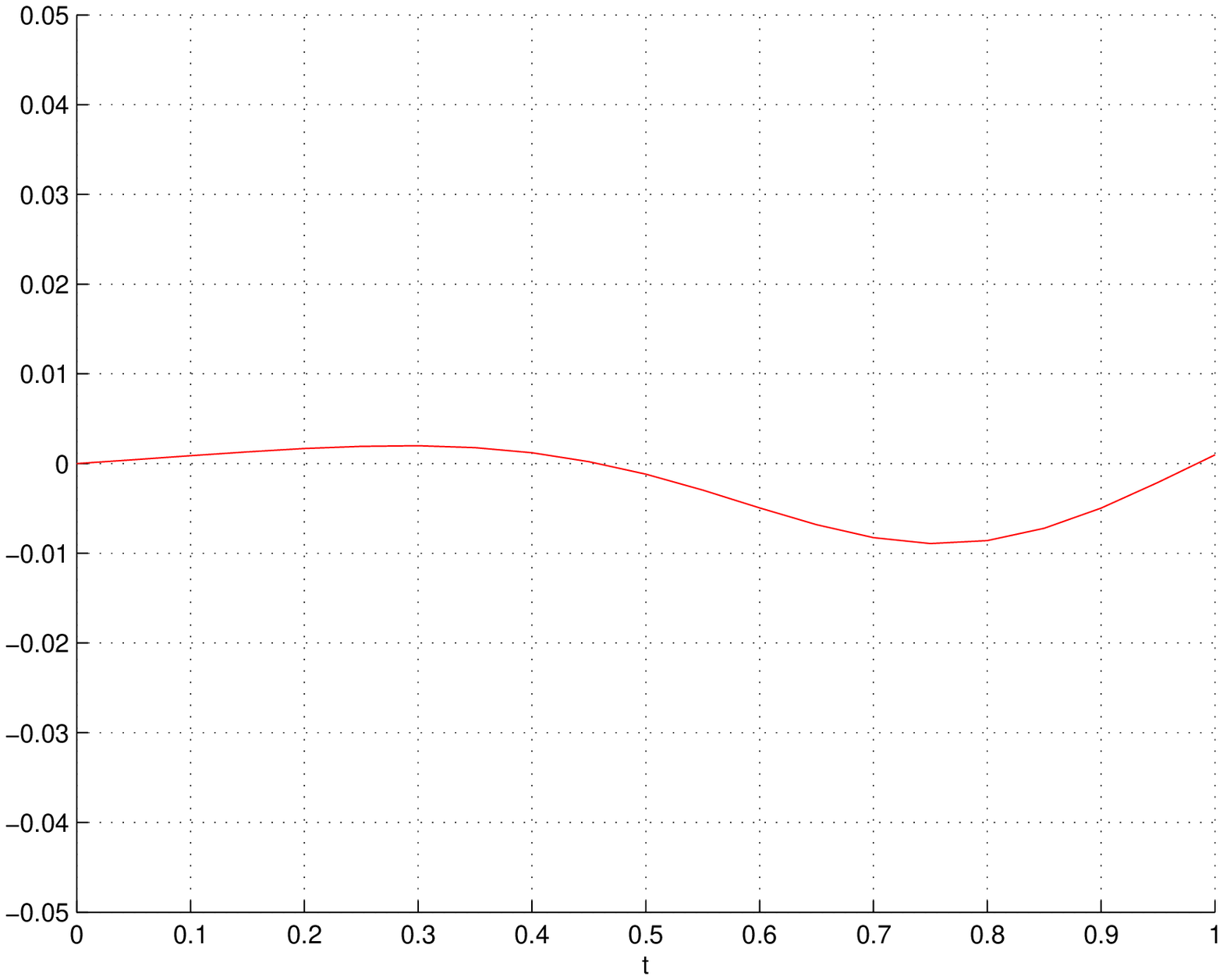}
	
	\caption{\label{Fig:rho_inst1}
\textit{Upper panel}: instantaneous correlation obtained with the two-factor model with seasonality (\textit{red}) and without seasonality (\textit{blue}) for the two cases detailed in Table \ref{tab:ModelParameters2} (case 1, \textit{left} and case 2, \textit{right}).
\textit{Center panel}: terms involved in expression \eqref{rho_2FactorModel}, $\theta_1(t)$ in \textit{red}, $v_1(t)$ in \textit{black} and $\theta_2(t)$  in \textit{green}, $v_2(t)$ in \textit{blue}, for the two cases detailed in Table \ref{tab:ModelParameters2} (case 1, \textit{left} and case 2, \textit{right}). $\theta_2$ and $v_2$ are identical as the second factor does not have seasonality.
\textit{Lower panel}: difference between instantaneous correlations obtained with and without seasonality for the two cases detailed in Table \ref{tab:ModelParameters2} (case 1, \textit{left} and case 2, \textit{right}).}
\end{figure}

It is worth remarking that, compared to the non-seasonal case, the instantaneous correlation is modified by the presence of seasonality in the dynamics of the first factor.
In the first case (high $\lambda$ for the seasonal factor) this correlation is lower at the beginning of the period under scrutiny and higher at the end.
In the second case (low $\lambda$ for the seasonal factor) it is the opposite.
The reasons underpinning the influence of $\lambda_1$ and $\lambda_2$ on the seasonal behaviour of the correlation are still conjectural and need to be further investigated.
So far, it seems on the one hand that if we have $\lambda_2 < 1 < \lambda_1$, then the effect of the first seasonal function on the instantaneous correlation is in the opposite direction,
i.e. during periods of high variance the correlation is decreased, and during periods of low variance the correlation is increased.
This pattern can be observed in the three l.h.s. panels of Figure \ref{Fig:rho_inst1} illustrating Case 1.
On the other hand, if we have $\lambda_1 < 1 < \lambda_2$, then the effect of the first seasonal function on the instantaneous correlation is in the same direction,
i.e. during periods of high variance the correlation is increased, and during periods of low variance the correlation is decreased.
This pattern can be observed in the three r.h.s. panels of Figure \ref{Fig:rho_inst1} illustrating Case 2.

Note also that the instantaneous correlation is not always decreasing with $t$.
This remark holds true for the model with seasonality as well as for the case without seasonality.
Figure \ref{Fig:rho_inst2} presents a non-decreasing instantaneous correlation obtained with the two-factor model with and without seasonality in the special case $\sigma_1 = \sigma_2 = 0$.
The other parameters are chosen such that the curves are non-decreasing.
In the presented illustration the correlation seems to be constant at first sight, but on closer inspection can be seen to be decreasing then increasing.

\begin{figure}[H]
\centering
	\includegraphics[height=5.0cm]{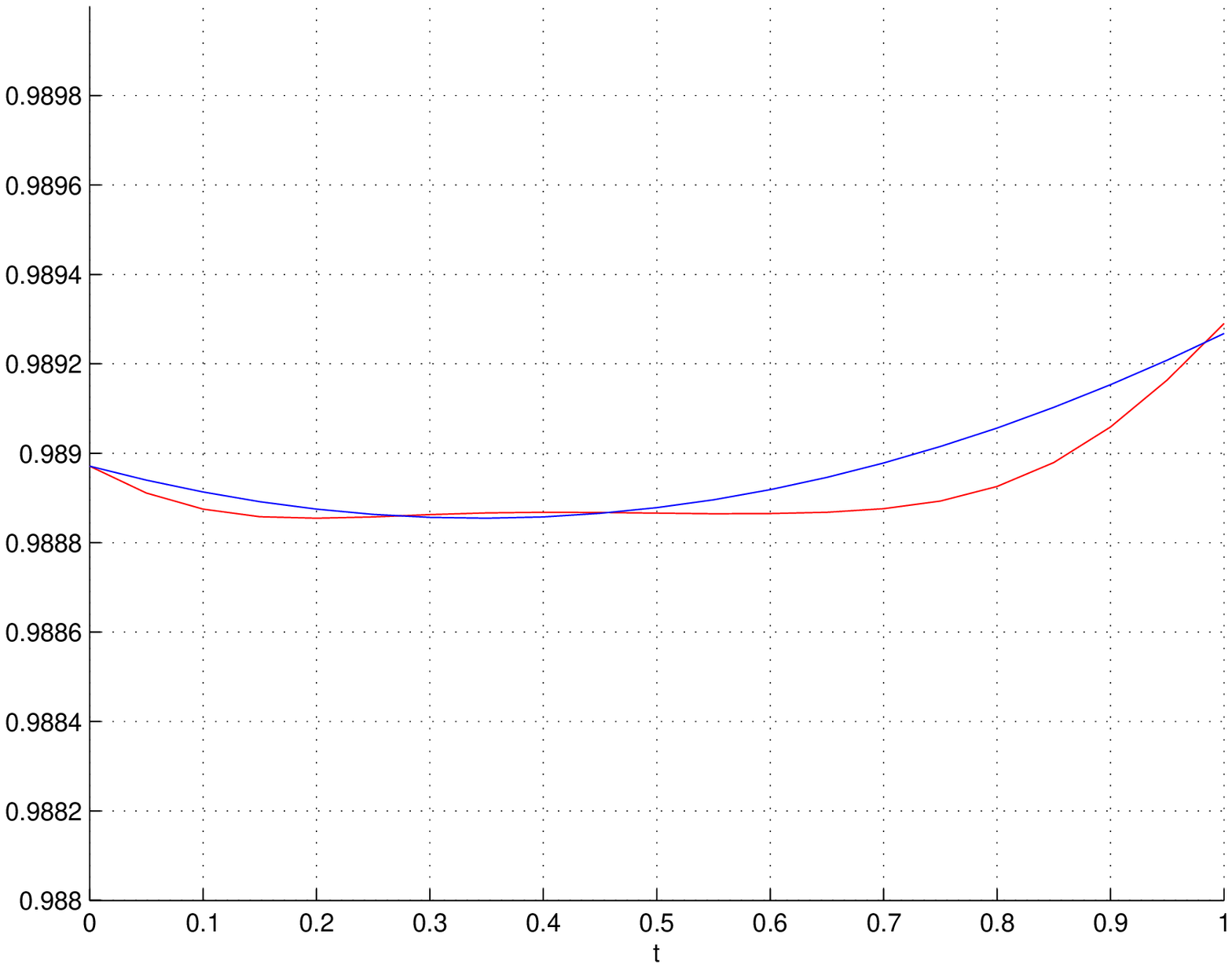}
	
	\caption{\label{Fig:rho_inst2}
Instantaneous correlation obtained with the two-factor model with seasonality (\textit{red}) and without seasonality (\textit{blue}).
Parameters are chosen such that both curves are non-decreasing.}
\end{figure}

\subsection{The Effect of Seasonality on Calendar Spread Option Prices}
\label{ss:EffectSeasonalityCalendarSpreadOptionPrices}

In this section, we investigate the effect of seasonality on the prices of calendar spread options.
We do so by means of a numerical study in which we consider different cases of model parameters and compute calendar spread option prices with and without seasonality.

This investigation is conducted with the two-factor model with stochastic volatility on both factors but seasonality only on the first one in order to isolate the effect of seasonality.
The form of the seasonal pattern is again the sinusoidal one given by equation \eqref{sinusoid}.
Calendar spread option prices are obtained with the \citet{CaldanaFusai2013} method that is well suited for our model.
Implied correlations are extracted using a numerical root search.

Table \ref{tab:SpreadOptionPrices1} presents, for different magnitudes of seasonality, calendar spread option prices obtained with the two-factor model.
Case 1 corresponds to the model without seasonality, Case 2 corresponds to a moderate seasonality and Case 3 to a stronger seasonality.
The model parameters for these three cases are shown in Table \ref{tab:ModelParameters3}.
Figure \ref{Fig:rho_imp1} plots the implied correlation term-structures obtained from calendar spread options in the considered cases.

Prices presented in Table \ref{tab:SpreadOptionPrices1} show the effect of seasonality on calendar spread options.
Irrespective of the strike, prices are increasing with the seasonality magnitude for the two maturities before and at a multiple of $t_0$ and decreasing for the two maturities after a multiple of $t_0$.

Figure \ref{Fig:rho_imp1} shows that seasonality has a noticeable effect on the implied correlation term-structure for calendar spread options.
Compared to the non-seasonal case, implied correlations are higher for maturities before a multiple of $t_0$ and lower for maturities after a multiple of $t_0$.

\begin{table}[htbp]
  \centering
	\small
  \caption{Calendar spread option prices obtained with the two-factor model with stochastic volatility and different magnitudes of seasonality on the first factor.
  The reported prices are for different strikes and maturities.
  The difference between the underlying futures maturities is held constant at $6$ months.
  The corresponding model parameters are shown in Table \ref{tab:ModelParameters3}.}
    \begin{tabular}{cccccccccccc}
    \addlinespace
    \toprule
		 & & & \multicolumn{3}{c}{Case 1: no seasonality} & \multicolumn{3}{c}{Case 2: moderate seasonality} & \multicolumn{3}{c}{Case 3: strong seasonality} \\
		\cmidrule(lr){4-6} \cmidrule(lr){7-9} \cmidrule(lr){10-12}
		$T$ & $T1$ & $T2$ & $K=-10$ & $K=0$ & $K=10$ & $K=-10$ & $K=0$ & $K=10$ & $K=-10$ & $K=0$ & $K=10$ \\
		\midrule
$0.33$ & $0.33$ & $0.83$ & $10.4381$ & $2.5993$ & $0.4951$ & $10.5605$ & $3.0931$ & $0.6822$ & $10.6647$ & $3.4190$ & $0.8351$ \\
$0.58$ & $0.58$ & $1.08$ & $10.6718$ & $3.4672$ & $0.9188$ & $10.8369$ & $3.7314$ & $1.1026$ & $10.9583$ & $3.9252$ & $1.2400$ \\
$0.83$ & $0.83$ & $1.33$ & $11.1648$ & $4.5229$ & $1.6366$ & $11.1425$ & $4.3690$ & $1.5636$ & $11.1291$ & $4.2429$ & $1.5120$ \\
$1.08$ & $1.08$ & $1.58$ & $11.3550$ & $4.6534$ & $1.8404$ & $11.2930$ & $4.5533$ & $1.7709$ & $11.2459$ & $4.4773$ & $1.7187$ \\
$1.33$ & $1.33$ & $1.83$ & $11.1894$ & $4.1480$ & $1.6234$ & $11.2674$ & $4.4534$ & $1.7695$ & $11.3323$ & $4.6612$ & $1.8843$ \\
$1.58$ & $1.58$ & $2.08$ & $11.1872$ & $4.4875$ & $1.7783$ & $11.3086$ & $4.6726$ & $1.9165$ & $11.3983$ & $4.8091$ & $2.0197$ \\
$1.83$ & $1.83$ & $2.33$ & $11.5326$ & $5.2374$ & $2.3126$ & $11.4832$ & $5.0666$ & $2.2079$ & $11.4488$ & $4.9289$ & $2.1312$ \\
$2.08$ & $2.08$ & $2.58$ & $11.6266$ & $5.2422$ & $2.3840$ & $11.5474$ & $5.1193$ & $2.2922$ & $11.4873$ & $5.0262$ & $2.2233$ \\
$2.33$ & $2.33$ & $2.83$ & $11.3966$ & $4.6788$ & $2.0671$ & $11.4623$ & $4.9321$ & $2.1982$ & $11.5160$ & $5.1052$ & $2.2992$ \\
$2.58$ & $2.58$ & $3.08$ & $11.3529$ & $4.8929$ & $2.1509$ & $11.4589$ & $5.0521$ & $2.2717$ & $11.5367$ & $5.1692$ & $2.3617$ \\
$2.83$ & $2.83$ & $3.33$ & $11.6528$ & $5.5345$ & $2.6146$ & $11.5933$ & $5.3604$ & $2.4988$ & $11.5511$ & $5.2210$ & $2.4132$ \\
				
    \bottomrule
    \end{tabular}
  \label{tab:SpreadOptionPrices1}
\end{table}

\begin{table}[htbp]
  \centering
  \caption{Model parameters in the three cases used to compute calendar spread option prices and the corresponding implied correlations.}
    \begin{tabular}{cccc}
    \addlinespace
    \toprule
		parameters & Case 1 & Case 2 & Case 3 \\
		\midrule
		
		$v_{01}$ & $0.10$ & $0.10$ & $0.10$ \\
		$v_{02}$ & $0.04$ & $0.04$ & $0.04$ \\
		$\lambda_1$ &$2.00$ & $2.00$ & $2.00$ \\
		$\lambda_2$ &$0.50$ & $0.50$ & $0.50$ \\
 		$\kappa_1$ &$0.80$ & $0.80$ & $0.80$ \\
		$\kappa_2$ &$0.80$ & $0.80$ & $0.80$ \\
		$\sigma_1$ &$1.20$ & $1.20$ & $1.20$ \\
		$\sigma_2$ &$0.90$ & $0.90$ & $0.90$ \\
		$\rho_1$ &$-0.25$ & $-0.25$ & $-0.25$ \\
		$\rho_2$ &$-0.25$ & $-0.25$ & $-0.25$ \\
		$a_1$ & $0.25$ & $0.25$ & $0.25$ \\
		$a_2$ & $0.10$ & $0.10$ & $0.10$ \\
		$b_1$ & $0.00$ & $0.15$ & $0.35$ \\
		$b_2$ & $0.00$ & $0.00$ & $0.00$ \\
		$t_{01}$ & $7/12$ & $7/12$ & $7/12$ \\
		$t_{02}$ & $7/12$ & $7/12$ & $7/12$ \\
				
    \bottomrule
    \end{tabular}
  \label{tab:ModelParameters3}
\end{table}

\begin{figure}[H]
\centering
	\includegraphics[height=6.0cm]{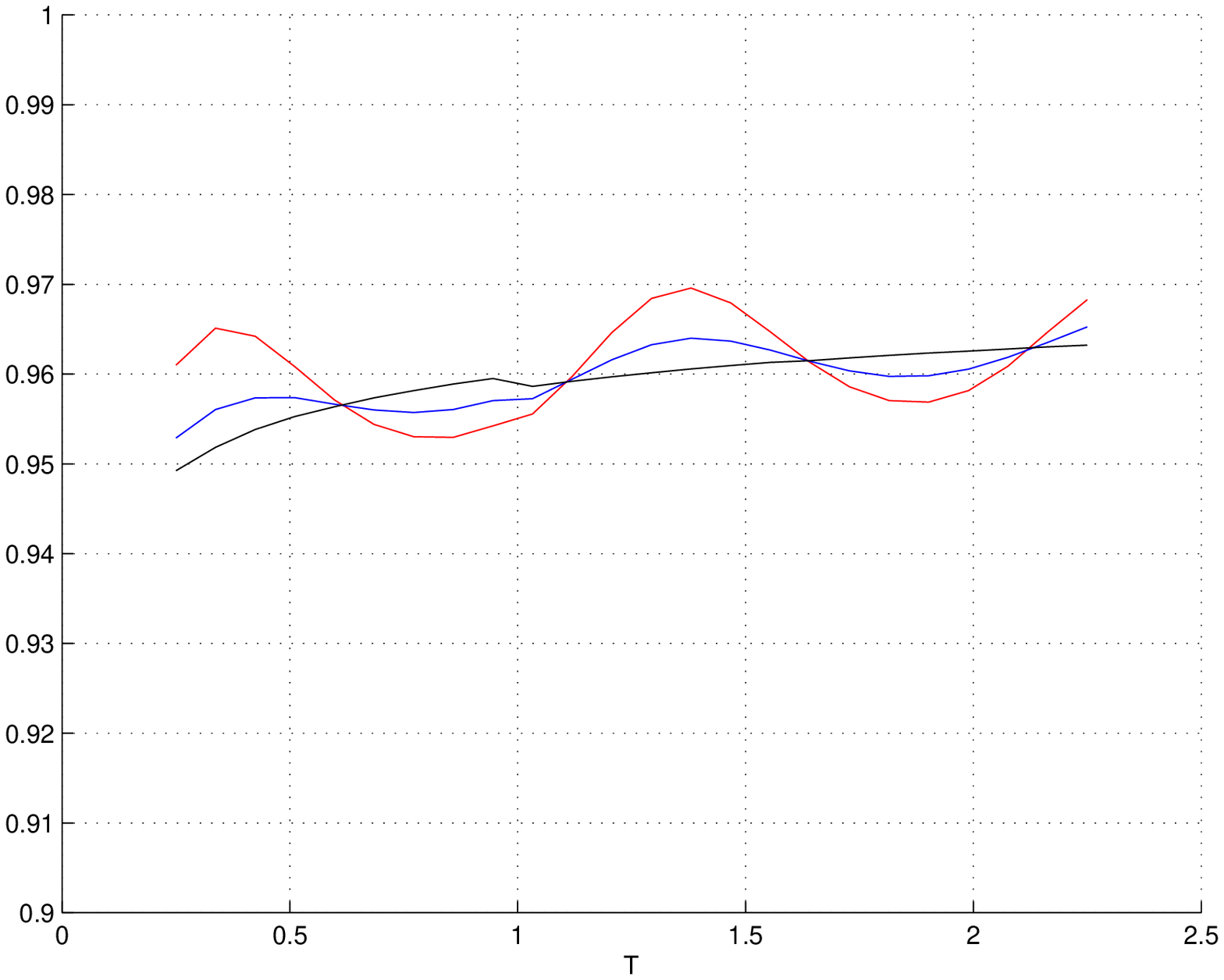}
	
	\caption{\label{Fig:rho_imp1}
Term-structure of implied correlation from calendar spread option prices obtained with the two-factor model with stochastic volatility and different magnitudes of seasonality on the first factor.
The different magnitudes considered are: no seasonality (\textit{black}), moderate seasonality (\textit{blue}) and strong seasonality (\textit{red}).
The corresponding model parameters are found in Table \ref{tab:ModelParameters3}.}
\end{figure}


\section{Conclusion}
\label{s:Conclusion}

We introduce a multi-factor seasonal stochastic volatility model for futures contracts that is capable of reproducing the Samuelson effect.
We show that the model can accommodate very general specifications of the seasonality functions, including only piece-wise continuous ones.
As an illustration, we suggest five different seasonality functions, some of which are familiar from the literature,
and provide details of how to incorporate these into the joint characteristic function of the model in a numerically fast and efficient way.

In a series of examples, we show that this model can reproduce seasonal implied volatility surfaces of European options on futures contracts.
Furthermore, implied correlations calculated from calendar spread option prices also show seasonal patterns.
Finally, we demonstrate that the instantaneous correlation between the returns of two futures contracts with different maturities
is both stochastic and seasonal, and make a conjecture about the relationship between the magnitudes of the Samuelson damping factors
and the effect of the seasonality functions on this correlation.


\pagebreak

\appendix


\section{Proofs}
\label{a1:Proofs}

In this appendix we prove Propositions \ref{Prop:CIR_SDE_Solution} and \ref{Prop:JointCharacteristicFunction}.
We also give the proofs of expressions (\ref{int_sinusoid}) and (\ref{int_sawtooth}).

\begin{MyProof}{of Proposition \ref{Prop:CIR_SDE_Solution}.}
\begin{enumerate}
\item
The drift function $b(t, v_t) := \kappa (\theta(t) - v(t))$ in \eqref{CIR_SDE_TimeDependentTheta} is Lipschitz continuous w.r.t the second argument, i.e.
$$
| b(t, x) - b(t, y) | \leq K | x - y |,
$$
where we can choose $K = \kappa$, since $| b(t, x) - b(t, y) | = \kappa | x - y |$.
Then Proposition 2.13 (Yamada and Watanabe) of \citet{KaratzasShreve1988} guarantees the existence of a unique strong solution to \eqref{CIR_SDE_TimeDependentTheta}
with continuous sample paths.
\item
The comparison result given in Proposition 2.18 of \citet{KaratzasShreve1988} establishes $v_t \geq \tilde{v}_t$ a.s. for all $t \geq 0$
under the hypothesis that the drift function $b(t, v_t)$ is continuous.
Now, if $\theta$ has a discontinuity at time $t_1$, we know from this argument applied to the interval $[0, t_1[$ that $\tilde{v}_{t} \leq v_{t} \forall t \in [0, t_1[$ (a.s.).
It then follows from the continuity of the sample paths that $\tilde{v}_{t_1} \leq v_{t_1}$ (a.s.),
and we can apply the argument again to the interval $]t_1, t_2[$ to obtain $\tilde{v}_{t} \leq v_{t} \forall t \in ]t_1, t_2[$ (a.s.).
Since by assumption the set $\mathcal{T}$ of times where $\theta$ has discontinuities has no limit points,
we can proceed in this manner to cover all of $\mathbb{R}_0^+$.
\item
The Feller condition $\sigma^2 < 2 \kappa \theta_{min}$ for $\theta_{min}$ implies the strict positivity a.s. of $\tilde{v}$.
The strict positivity of $v$ itself therefore follows immediately from (ii).
\end{enumerate}
\end{MyProof}

\begin{MyProof}{of Proposition \ref{Prop:JointCharacteristicFunction}.}
The proof is an extension of the proof of Proposition 2.1 of \citet{SchneiderTavin2015} to the case where the variance mean-reversion level $\theta$ is time-dependent.
Going from $\theta$ to $\theta(t)$ leads to changes in two places.
The first is in Lemma A.1 of \citet{SchneiderTavin2015}, which needs to be modified as follows.
\begin{lemma}
\label{Lemma:StochasticIntegral}
Let $\theta: \mathbb{R}_0^+ \to \mathbb{R}^+$ be the seasonal mean-reversion level function,
and let
$$
\hat{\theta}_T(\lambda) := \int_0^T e^{\lambda t} \theta(t) dt
$$
be its transform. Then
\begin{equation}
\label{StochasticIntegral}
\sigma \int_0^T f_1(t) \sqrt{v(t)} d\tilde{B}(t)
=
\left[ f_1(t) v(t) \right]_0^T - f_1(0) \kappa \hat{\theta}_T(\lambda) + (\kappa - \lambda) \int_0^T f_1(t) v(t) dt.
\end{equation}
\end{lemma}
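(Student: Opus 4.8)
The plan is to obtain \eqref{StochasticIntegral} by applying the Itô product rule to the process $t\mapsto f_1(t)v(t)$ and then rearranging. The two ingredients I rely on are: (a) $f_1$ is the deterministic exponential $f_1(t)=f_1(0)e^{\lambda t}$, so that $f_1\in C^1$ with $f_1'(t)=\lambda f_1(t)$ and the product rule carries no quadratic-covariation term; and (b) $v$ solves its defining square-root SDE $dv(t)=\kappa(\theta(t)-v(t))\,dt+\sigma\sqrt{v(t)}\,d\tilde B(t)$ and has continuous sample paths, which is precisely Proposition \ref{Prop:CIR_SDE_Solution}(i).

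First I would write, using $df_1(t)=\lambda f_1(t)\,dt$ and the SDE for $v$,
\[
d\bigl(f_1(t)v(t)\bigr)=\bigl[(\lambda-\kappa)f_1(t)v(t)+\kappa f_1(t)\theta(t)\bigr]\,dt+\sigma f_1(t)\sqrt{v(t)}\,d\tilde B(t).
\]
Since $v$ is continuous on $[0,T]$ and $f_1$ is bounded there, $\int_0^T f_1^2(t)v(t)\,dt<\infty$ almost surely, so the stochastic integral is a well-defined continuous local martingale and the display above is a genuine identity of semimartingales. Integrating over $[0,T]$ and solving for the stochastic integral gives
\[
\sigma\int_0^T f_1(t)\sqrt{v(t)}\,d\tilde B(t)=\bigl[f_1(t)v(t)\bigr]_0^T+(\kappa-\lambda)\int_0^T f_1(t)v(t)\,dt-\kappa\int_0^T f_1(t)\theta(t)\,dt.
\]
It then remains only to identify the deterministic integral: because $f_1(t)=f_1(0)e^{\lambda t}$,
\[
\int_0^T f_1(t)\theta(t)\,dt=f_1(0)\int_0^T e^{\lambda t}\theta(t)\,dt=f_1(0)\,\hat\theta_T(\lambda),
\]
a finite Riemann integral since $\theta$ is piecewise continuous and bounded. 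Substituting yields \eqref{StochasticIntegral} exactly.

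The computation is short, and there is no real obstacle; the only point needing a word of care is that the (finitely many, on $[0,T]$) discontinuities of $\theta$ cause no difficulty, because $\theta$ enters only through the Riemann-integrable drift term and never through the martingale part, and it is the continuity of the sample paths of $v$ from Proposition \ref{Prop:CIR_SDE_Solution} that makes $f_1v$ a continuous semimartingale to which the product rule applies on all of $[0,T]$. If in the sequel one wants the stochastic integral to be a true martingale rather than merely a local one, one can add the standard bound $\sup_{t\le T}\mathbb{E}[v_t]<\infty$, obtained by comparing $v$ with a constant-parameter CIR process having mean-reversion level $\theta_{max}$; this is not required for the stated identity.
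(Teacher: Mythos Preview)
Your proof is correct and follows essentially the same route as the paper: the paper multiplies the SDE for $v$ by $f_1$, integrates, and then uses It\^o integration by parts on $\int_0^T f_1(t)\,dv(t)$, which is precisely your product-rule computation written in integral form. Your additional remarks on well-definedness of the stochastic integral and on the harmlessness of the discontinuities of $\theta$ are sound but go slightly beyond what the paper spells out.
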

\begin{proof}
Multiplying equation \eqref{VarianceSDE} by $f_1(t)$ and then integrating from $0$ to $T$ gives
\begin{equation}
\label{dv-integral1}
\int_0^T f_1(t) dv(t) = \int_0^T f_1(t) \kappa (\theta(t) - v(t)) dt + \sigma \int_0^T f_1(t) \sqrt{v(t)} d\tilde{B}(t).
\end{equation}
Using It\^{o}-integration by parts (see \cite{Oksendal2003}), we also have
\begin{align}
\int_0^T f_1(t) dv(t)
&= \left[ f_1(t) v(t) \right]_0^T - \int_0^T v(t) df_1(t)
\nonumber
\\
&= \left[ f_1(t) v(t) \right]_0^T - \lambda \int_0^T f_1(t) v(t) dt.
\label{dv-integral2}
\end{align}
Equating the right hand sides of equations \eqref{dv-integral1} and \eqref{dv-integral2} gives
\begin{align*}
\sigma \int_0^T f_1(t) \sqrt{v(t)} d\tilde{B}(t)
&= \left[ f_1(t) v(t) \right]_0^T - \lambda \int_0^T f_1(t) v(t) dt - \int_0^T f_1(t) \kappa (\theta(t) - v(t)) dt
\\
&= \left[ f_1(t) v(t) \right]_0^T - \kappa \int_0^T f_1(t) \theta(t) dt + (\kappa - \lambda) \int_0^T f_1(t) v(t) dt
\\
&= \left[ f_1(t) v(t) \right]_0^T - f_1(0) \kappa \int_0^T e^{\lambda t} \theta(t) dt + (\kappa - \lambda) \int_0^T f_1(t) v(t) dt
\\
&= \left[ f_1(t) v(t) \right]_0^T - f_1(0) \kappa \hat{\theta}_T(\lambda) + (\kappa - \lambda) \int_0^T f_1(t) v(t) dt,
\end{align*}
which proves the lemma.
\end{proof}

The second change in the proof is due to the appearance of $\theta$ in the generator of the process $v$.
As in \citet{SchneiderTavin2015}, let the function $h$ be given by
\begin{equation*}
h(t,v) = {\mathbb E} \left[ \exp \left( i \frac{\rho}{\sigma} f_1(T) v(T) + \int_t^T q(s) v(s) ds \right) \right].
\end{equation*}
Now $h$ satisfies the PDE
\begin{equation}
\label{hPDE}
\frac{\partial h}{\partial t} (t,v)
+ \kappa (\theta(t) - v(t)) \frac{\partial h}{\partial v} (t,v)
+ \frac{1}{2} \sigma^2 v(t) \frac{\partial^2 h}{\partial v^2} (t,v)
+ q(t) v(t) h(t,v)
= 0,
\end{equation}
with terminal condition
\begin{equation*}
 h(T,v) = \exp \left( i \frac{\rho}{\sigma} f_1(T) v(T) \right).
\end{equation*}

Again, we know from \citet{DuffiePanSingleton2000} that $h$ has affine form
\begin{equation}
\label{hGuess}
h(t,v) = \exp \left( A(t,T) v(t) + B(t,T) \right),
\end{equation}
with $A(T,T) = i \frac{\rho}{\sigma} f_1(T), B(T,T) = 0.$
Putting \eqref{hGuess} in \eqref{hPDE} gives
\begin{equation*}
B_t + A_t v + \kappa (\theta(t) - v) A + \frac{1}{2} \sigma^2 v A^2 + q v = 0,
\end{equation*}
and collecting the terms with and without $v$ leads to the two ODEs
\begin{align}
\label{A_RiccatiEquation}
A_t - \kappa A + \frac{1}{2} \sigma^2 A^2 + q &= 0,
\\
\label{A_Primitive}
B_t + \kappa \theta(t) A &= 0.
\end{align}
This completes the proof of the proposition.
\end{MyProof}

Note that $\theta$ only appears in the second ODE \eqref{A_Primitive},
and that therefore the closed-form expression previously given for $A$ in \citet{SchneiderTavin2015} can still be used.
Only the function $B$ changes due to the time-dependence of $\theta$.

\begin{MyProof}{of expression (\ref{int_sinusoid}). Transform of the sinusoidal pattern.}

With $y=t-t_0$
\begin{align}
\hat{\theta}_{T}(\lambda) & = \int^{T}_{0}{\left(a + b\cos{\left(2\pi\left(t - t_0\right)\right)} \right)e^{\lambda t}dt} \\
 & = \frac{a}{\lambda}\left(e^{\lambda T} - 1\right) + b e^{\lambda t_0}\int^{T-t_0}_{-t_0}{\cos{\left(2\pi y \right)} e^{\lambda y}dy}.
\end{align}
A primitive of $y \mapsto \cos{\left(2\pi y \right)} e^{\lambda y}$ is
\begin{equation}
y \mapsto \frac{e^{\lambda y}}{\lambda^2 + 4 \pi^2}\left(\lambda\cos{\left(2\pi y \right)} + 2 \pi \sin{\left(2\pi y \right)} \right),
\end{equation}
and
\begin{align}
\int^{T-t_0}_{-t_0}{\cos{\left(2\pi y \right)} e^{\lambda y}dy} & = \left[\frac{e^{\lambda y}}{\lambda^2 + 4 \pi^2}\left(\lambda\cos{\left(2\pi y \right)} + 2 \pi \sin{\left(2\pi y \right)} \right) \right]^{T-t_0}_{-t_0} \\
 & = \frac{e^{\lambda (T-t_0)}}{\lambda^2 + 4 \pi^2}\left(2 \pi \sin{\left(2\pi (T-t_0) \right)} + \lambda\cos{\left(2\pi (T-t_0) \right)} \right) \\
 & + \frac{e^{-\lambda t_0}}{\lambda^2 + 4 \pi^2}\left(2 \pi \sin{\left(2\pi t_0 \right)} - \lambda\cos{\left(2\pi t_0 \right)} \right)
\end{align}

\end{MyProof}

\begin{MyProof}{of expression (\ref{int_sawtooth}). Transform of the sawtooth pattern.}

With $T\geq 0$ and $t_0 \in [0,1[$
\begin{align}
\hat{\theta}_{T}(\lambda) & = \int^{T}_{0}{\left(a + b\left(t-t_0-\left\lfloor t-t_0 \right\rfloor \right) \right)e^{\lambda t}dt} \\
 & = \int^{T}_{0}{\left(a + b\left(t-t_0 \right) \right)e^{\lambda t}dt} -b \int^{T}_{0}{\left\lfloor t-t_0 \right\rfloor e^{\lambda t}dt}.
\end{align}

The first integral is computed as
\begin{equation}
\int^{T}_{0}{\left(a + b\left(t-t_0 \right) \right)e^{\lambda t}dt} = \frac{1}{\lambda}\left(b\left(\frac{1}{\lambda}+t_0 \right)-a \right)+\frac{e^{\lambda T}}{\lambda}\left(a+b\left(T-\frac{1}{\lambda}-t_0 \right) \right).
\end{equation}

The integral involving the floor function can be split, with $y=t-t_0$, as
\begin{equation}
\int^{T}_{0}{\left\lfloor t-t_0 \right\rfloor e^{\lambda t}dt} = \int^{T-t_0}_{0}{\left\lfloor y \right\rfloor e^{\lambda \left(y+t_0 \right)}dy} + \int^{0}_{-t_0}{\left\lfloor y \right\rfloor e^{\lambda \left(y+t_0 \right)}dy}.
\end{equation}

Noting that $\left\lfloor y \right\rfloor = -1$ for $y \in [-t_0,0[$ we have
\begin{equation}
\int^{0}_{-t_0}{\left\lfloor y \right\rfloor e^{\lambda \left(y+t_0 \right)}dy} = \frac{1}{\lambda}\left(1-e^{\lambda t_0} \right).
\end{equation}

The other part of the term with the floor function can be written, when $T\geq t_0$, as
\begin{align}
\int^{T-t_0}_{0}{\left\lfloor y \right\rfloor e^{\lambda \left(y+t_0 \right)}dy} & = e^{\lambda t_0} \left(\sum^{n-1}_{k=0}{k\int^{k+1}_{k}{e^{\lambda y}dy}} + n \int^{n+\alpha}_{n}{e^{\lambda y}dy} \right), \nonumber \\
 & = \frac{e^{\lambda t_0}}{\lambda} \left( \sum^{n-1}_{k=0}{k\left(e^{\lambda (k+1)}- e^{\lambda k}\right)} + n \left(e^{\lambda (n+\alpha)}- e^{\lambda n}  \right) \right), \nonumber \\
 & = \frac{e^{\lambda t_0}}{\lambda} \left( n e^{\lambda (n+\alpha)} - \sum^{n}_{k=1}{e^{\lambda k} }\right),
\end{align}
 with $n = \left\lfloor T-t_0 \right\rfloor$ and $\alpha = T-t_0-\left\lfloor T-t_0 \right\rfloor$.

When $0 \leq T < t_0$, as $t_0<1$, $\left\lfloor T-t_0 \right\rfloor = -1$ and $\left\lfloor y \right\rfloor = -1$ for $y \in [T-t_0,0[$ so that we have
\begin{equation}
\int^{T-t_0}_{0}{\left\lfloor y \right\rfloor e^{\lambda \left(y+t_0 \right)}dy} = \frac{e^{\lambda t_0}}{\lambda}\left(1 - e^{\lambda \left( T- t_0 \right)} \right).
\end{equation}

Gathering the components, the integral involving the floor function can now be written as
\begin{equation}
\int^{T}_{0}{\left\lfloor t-t_0 \right\rfloor e^{\lambda t}dt} = \frac{e^{\lambda t_0}}{\lambda}\left(\left\lfloor T-t_0 \right\rfloor e^{\lambda(T-t_0)} - \left(\sum^{\left\lfloor T-t_0 \right\rfloor}_{k=1}{e^{\lambda k}}\right)\mathbb{I}_{\left\{ T \geq t_0\right\}} + \mathbb{I}_{\left\{ T<t_0\right\}}+e^{-\lambda t_0}-1 \right).
\end{equation}

\end{MyProof}

\begin{MyProof}{of expression (\ref{int_triangle}). Transform of the triangle pattern.}

With $T\geq 0$ and $t_0 \in [0,1[$
\begin{align}
\hat{\theta}_{T}(\lambda) & = \int^{T}_{0}{\left(a + b\left|\frac{1}{2}-\left(t-t_0-\left\lfloor t-t_0 \right\rfloor \right) \right| \right) e^{\lambda t}dt} \\
 & = \frac{a}{\lambda}\left(e^{\lambda T}-1 \right) + b\int^{T}_{0}{\left|\frac{1}{2}-\left(t-t_0-\left\lfloor t-t_0 \right\rfloor \right) \right| e^{\lambda t}dt}.
\end{align}

With $y=t-t_0$, the last integral becomes
\begin{equation}
\int^{T}_{0}{\left|\frac{1}{2}-\left(t-t_0-\left\lfloor t-t_0 \right\rfloor \right) \right| e^{\lambda t}dt} = e^{\lambda t_0} \left(\int^{0}_{-t_0}{\left|\frac{1}{2}-\left(y-\left\lfloor y \right\rfloor \right) \right| e^{\lambda y}dy} + \int^{T-t_0}_{0}{\left|\frac{1}{2}-\left(y-\left\lfloor y \right\rfloor \right) \right| e^{\lambda y}dy} \right).
\end{equation}

Two integrals remain to be computed, one on $[-t_0,0]$ and the other on $[0,T-t_0]$.
To compute the first, one needs to distinguish two cases.
When $t_0 \in [0,\frac{1}{2}]$, it can be computed as
\begin{equation}
\int^{0}_{-t_0}{\left|\frac{1}{2}-\left(y-\left\lfloor y \right\rfloor \right) \right| e^{\lambda y}dy} = \frac{1}{\lambda}\left(z_2 - \left(z_2-t_0 \right)e^{-\lambda t_0}\right),
\end{equation}
and when $t_0 \in ]\frac{1}{2}, 1[$, it is
\begin{equation}
\int^{0}_{-t_0}{\left|\frac{1}{2}-\left(y-\left\lfloor y \right\rfloor \right) \right| e^{\lambda y}dy} = \frac{1}{\lambda}\left(z_2 + \frac{2}{\lambda}e^{-\frac{\lambda}{2}} + \left(z_2-t_0 \right)e^{-\lambda t_0}\right),
\end{equation}
with $z_2 = \frac{1}{2}-\frac{1}{\lambda}$.
To compute the other integral, on $[0,T-t_0]$, one needs first to distinguish two cases. First, when $T \geq t_0$, we have
\begin{equation}
\int^{T-t_0}_{0}{\left|\frac{1}{2}-\left(y-\left\lfloor y \right\rfloor \right) \right| e^{\lambda y}dy} = \sum^{n-1}_{k=0}{\int^{k+1}_{k}{\left|\frac{1}{2}-\left(y-\left\lfloor y \right\rfloor \right) \right| e^{\lambda y}dy}} + \int^{n+\alpha}_{n}{\left|\frac{1}{2}-\left(y-\left\lfloor y \right\rfloor \right) \right| e^{\lambda y}dy},
\end{equation}
with $n = \left\lfloor T-t_0 \right\rfloor$ and $\alpha = T-t_0 - \left\lfloor T-t_0 \right\rfloor$. For $k=1,\dots, n-1$, the integral in the sum can be computed as
\begin{align}
\int^{k+1}_{k}{\left|\frac{1}{2}-\left(y-\left\lfloor y \right\rfloor \right) \right| e^{\lambda y}dy} & = \int^{k+\frac{1}{2}}_{k}{\left|\frac{1}{2}-\left(y-\left\lfloor y \right\rfloor \right) \right| e^{\lambda y}dy} + \int^{k+1}_{k+\frac{1}{2}}{\left|\frac{1}{2}-\left(y-\left\lfloor y \right\rfloor \right) \right| e^{\lambda y}dy} \\
 & = \int^{k+\frac{1}{2}}_{k}{\left(\frac{1}{2}-y+k \right) e^{\lambda y}dy} - \int^{k+1}_{k+\frac{1}{2}}{\left(\frac{1}{2}-y+k \right) e^{\lambda y}dy} \\
& = \frac{1}{\lambda}\left(\frac{2}{\lambda}e^{\lambda\left(k+\frac{1}{2}\right)} + z_2 e^{\lambda(k+1)} - z_1 e^{\lambda k}  \right),
\end{align}

where $z_2=\frac{1}{2}+\frac{1}{\lambda}$. The sum becomes
\begin{equation}
\sum^{n-1}_{k=0}{\int^{k+1}_{k}{\left|\frac{1}{2}-\left(y-\left\lfloor y \right\rfloor \right) \right| e^{\lambda y}dy}} =  \frac{1}{\lambda}\left(\frac{2}{\lambda}e^{\frac{\lambda}{2}} + z_2 e^{\lambda} - z_1  \right) \sum^{n-1}_{k=0}{e^{\lambda k}}.
\end{equation}

The integral on $[n,n+\alpha]$ is computed,  as
\begin{equation}
\int^{n+\alpha}_{n}{\left|\frac{1}{2}-\left(y-\left\lfloor y \right\rfloor \right) \right| e^{\lambda y}dy} = \frac{e^{\lambda n}}{\lambda}\left(z_3 e^{\lambda \alpha} \mathbb{I}_{\left\{\alpha \leq \frac{1}{2} \right\}} + \left(\frac{2}{\lambda}e^{\frac{\lambda}{2}} - z_3 e^{\lambda \alpha} \right) \mathbb{I}_{\left\{\alpha > \frac{1}{2} \right\}} - z_1\right),
\end{equation}
with $z_3 = z_1 - \alpha$.

The second case is when $T \in [0,t_0[$. In this case, the integral on $[0,T-t_0]$ becomes
\begin{align}
\int^{T-t_0}_{0}{\left|\frac{1}{2}-\left(y-\left\lfloor y \right\rfloor \right) \right| e^{\lambda y}dy} & = \frac{1}{\lambda}\left(\left(e^{\lambda(T-t_0)}\left(z_2+T-t_0 \right)-z_2 \right)\mathbb{I}_{\left\{T-t_0 \in [-\frac{1}{2},0[ \right\}} \right. \nonumber \\
 & \left. -\left(\frac{2}{\lambda}e^{-\frac{\lambda}{2}} + e^{\lambda(T-t_0)}\left(z_2+T-t_0 \right) + z_2 \right)\mathbb{I}_{\left\{T-t_0 \in [-1,-\frac{1}{2}[ \right\}} \right).
\end{align}
Gathering the components now gives the result.
\end{MyProof}


\bibliographystyle{plainnat}

\bibliography{articles,books,websites}

\end{document}